\newcounter{alphathm}
\newtheorem{theorem}{Theorem}[section]
\newtheorem{lemma}[theorem]{Lemma}
\newtheorem{proposition}[theorem]{Proposition}
\theoremstyle{definition}
\newtheorem{definition}[theorem]{Definition}
\newtheorem{remark}[theorem]{Remark}
\newtheorem{theorem*}[alphathm]{Theorem}
\numberwithin{equation}{section} 
\newcommand{\NN}{\mathbb N}
\newcommand{\ZZ}{\mathbb Z}
\newcommand{\RR}{\mathbb R}
\newcommand{\FF}{\mathbb F}
\newcommand{\Fp}{\FF_p}
\newcommand{\bR}{R}
\newcommand{\F}{{\mathsf F}}
\newcommand{\colonequal}{\mathrel{\mathop:}=}
\newcommand{\tcm}{}
\newcommand{\bk}{k}
\newcommand{\bh}{\boldsymbol h}
\newcommand{\bt}{\boldsymbol t}
\newcommand{\br}{\boldsymbol r}
\newcommand{\bi}{\boldsymbol i}
\newcommand{\mB}{\mathcal B}
\newcommand{\res}{{\mathsf{res}}}
\newcommand{\Card}{{\rm Card}}
\newcommand{\NP}{{\rm NP}}
\newcommand{\GL}{{\rm GL}}
\def\pFqnoargs#1#2{{}_#1F_#2}
\def\pFq#1#2#3#4#5#6{\pFqnoargs{#1}{#2}\biggl(\begin{matrix}%
{#3}\kern.707em{#4}\\{#5}%
\end{matrix}\,\bigg|\,#6\biggr)}   
\title[Diagonals and algebraicity modulo $p$]{Diagonals and algebraicity modulo $p$:\\
a sharper degree bound }
\date{}
\author{Boris Adamczewski}
\address{Univ. Claude Bernard Lyon 1, CNRS UMR 5208, Institut Camille Jordan, 43 blvd. du 11 novembre 1918, F-69622 Villeurbanne Cedex, France \\   and 
Centre International de Rencontres Mathématiques (CIRM),  
163 Av. de Luminy,
13009 Marseille, 
France}
\email{boris.adamczewski@math.cnrs.fr}
\author{Alin Bostan}
\address{Inria, Sorbonne Université, LIP6, 4 place Jussieu, 75252 Paris, France}
\email{alin.bostan@inria.fr}
\author{Xavier Caruso}
\address{CNRS, IMB, Universit\'e de Bordeaux, 351 cours de la Lib\'eration, 33405 Talence, France}
\email{xavier@caruso.ovh}
\keywords{Diagonals of algebraic power series, algebraicity modulo $p$, Christol's theorem}
\thanks{Partially supported by the French grant DeRerumNatura (ANR-19-CE40-0018), and by
the French--Austrian project EAGLES (ANR-22-CE91-0007 \& FWF I6130-N)}
\begin{document}
\maketitle

\begin{altabstract} 
En 1984, Deligne a montr\'e que pour tout nombre premier~$p$, la r\'eduction modulo $p$ de la diagonale d'une s\'erie formelle alg\'ebrique de plusieurs variables \`a  coefficients entiers est alg\'ebrique sur le corps des fonctions rationnelles \`a coefficients dans $\mathbb{F}_p$. De plus, il a sugg\'er\'e que les degr\'es d'alg\'ebricit\'e  $d_p$ de ces fonctions devaient cro\^itre au plus polynomialement en fonction de $p$. Dans cet article, nous pr\'esentons une nouvelle preuve du th\'eor\`eme de Deligne qui est  \'el\'ementaire et permet d'\'etablir la premi\`ere borne g\'en\'erale polynomiale avec un degr\'e raisonnable.  
\end{altabstract}

\begin{abstract}
In 1984, Deligne proved that for any prime number $p$, the reduction modulo $p$ of the diagonal of a multivariate algebraic power series
with integer coefficients is algebraic over the field of rational functions with coefficients in $\mathbb F_p$. Moreover, he conjectured that the
algebraic degrees $d_p$ of these functions should grow at most polynomially in $p$. 
In this article, we provide a new and elementary proof of Deligne's theorem, which yields the first general polynomial bound on $d_p$ with an explicit and reasonable degree.  
\end{abstract}

\setcounter{tocdepth}{1}

\section{Introduction}\label{sec: intro}

Given a ring $R$ and a multivariate power series  
\[
f(\bt)=  \sum_{\bi \in\mathbb{N}^n} a(\bi)\bt^{\bi} \in R[\![\bt]\!]\, ,
\]
where  we write $\bi=(i_1,\ldots,i_n)$, $a(\bi)=a(i_1,\ldots,i_n)$,  $\bt=(t_1,\ldots,t_n)$, and $\bt^{\bi}=t_1^{i_1}\cdots t_n^{i_n}$, 
the \emph{diagonal} of $f$ is the univariate power series  
\[
\Delta(f)(t) \coloneqq  \sum_{i=0}^{+\infty} a(i,\ldots,i) t^i \in R[\![t]\!]\, .
\]  
When $\mathfrak p$ is an ideal of $R$, the \emph{reduction modulo $\mathfrak p$} of $f$ is given by  
\[
f_{\vert \mathfrak p}(\bt) \coloneqq  \sum_{\bi \in\mathbb{N}^n} (a(\bi) \bmod \mathfrak p)\bt^{\bi}\in  (R/\mathfrak p)[\![\bt]\!]\, .
\]  

A particularly interesting case arises when  $R=\overline{\mathbb{Q}}$, the field of algebraic numbers, and the power series $f$ is algebraic over $\overline{\mathbb{Q}}(\bt)$. In this setting, $\Delta(f)$ satisfies a linear differential equation of Picard-Fuchs type and belongs to the class of Siegel's $G$-functions.  We refer the reader to \cite{AB13, Ch15} and the references therein for further discussion of these connections.  Moreover, such power series frequently appear in enumerative combinatorics \tcm{(see \cite[Chap.\ 6]{Stanley_book}, \cite{BMM10} and \cite[Chap.\ 4]{Melczer21})}. Additionally, diagonalization is closely related to integration (see \cite{De84} and \cite[Sec.\ 3]{Ch15}), and in general, $\Delta(f)$ is transcendental over $\overline{\mathbb{Q}}(t)$ (see, for instance, \cite{AB13}).  

By contrast, when $R=k$ is a field of characteristic $p>0$, and $f$ is a multivariate rational power series, Furstenberg \cite{Fu67} proved the following remarkable result:  the diagonal 
$\Delta(f)$  is always algebraic over $k(t)$. Later, Deligne \cite{De84} provided a geometric proof of Furstenberg's theorem and extended it to the case where $f$ is algebraic. He also noted an intriguing connection between these two seemingly opposite situations through reductions modulo $p$.  
Indeed, the relation $\Delta(f)_{\vert \mathfrak p} = \Delta(f_{\vert \mathfrak p})$ holds in general. It follows from Deligne's theorem that if $f\in \mathbb{Z}[\![\bt]\!]$ is algebraic, then $\Delta(f)_{\vert p}$ remains  algebraic over $\mathbb{F}_p(t)$ for all primes~$p$. This naturally raises the question of how the algebraic degree $d_p$ of  $\Delta(f)_{\vert p}$ evolves as $p$ varies.   
Deligne's proof relies on heavy arithmetic geometry machinery and proceeds inductively on the number~$n$ of variables. In \cite{De84}, he suggested that a direct proof would be more satisfactory and could yield a polynomial bound of the form $d_p = O(p^N)$. %

Deligne's work has been highly influential, inspiring several authors \cite{DL,SW,Ha88,Sa87,Sa86}, who independently provided a direct and elementary proof of his theorem. However,
this proof resulted in a very weak nonpolynomial bound on $d_p$. More recently, Bell and the first author \cite{AB13} established the first general polynomial bound, but their proof relies on an inductive argument, leading to  an excessively large value of $N$ in the worst case. A brief discussion of these results is provided in Section \ref{sec:comp}.

Our main contribution  is a new, direct,  and elementary proof of Deligne's theorem,  which yields the first polynomial bound $d_p<p^N$ with a reasonable value for $N$,  expressed in terms of the complexity of the underlying algebraic function.  

The complexity of an algebraic power series $f\in k[\![\bt]\!]$ is traditionally measured in terms of its \emph{degree} and  \emph{height}. 
Since the ring $k[\bt,y]$ is a unique factorization domain, there exists a polynomial $E(\bt,y)\in k[\bt,y]$ satisfying $E(\bt,f)=0$, which is minimal for divisibility.
Moreover such a polynomial is unique up to multiplication by a nonzero constant in~$k$.
The \emph{degree} of $f$ is defined as the degree in~$y$ of $E(\bt,y)$; equivalently, it is the degree of the field extension $k(\bt)(f)$ of~$k(\bt)$. 
For the height, we consider two natural definitions: the \emph{total height} of $f$ is the total degree in $\bt$ of $E(\bt,y)$ (where the total degree of the monomial $\bt^{\bi}$ is  $i_1+\cdots+i_n$), while its \emph{partial height} is the tuple $\bh =(h_1,\ldots,h_n)$ where, for each $i$, $h_i$ is the degree in $t_i$ of $E(\bt, y)$.

We recall that any element $f$ in the algebraic closure of $\mathbb F_p(t)$  is annihilated by a \emph{linearized polynomial},  {\it i.e.}, a polynomial  $P(X)\in\Fp(t)[X]$ of the form $$c_0X+c_1X^p+\cdots +c_NX^{p^N} \,, \quad\quad c_N\not=0\,.$$ 
The integer $N$ is called the \emph{$p$-degree} of  $P$. 
It easily follows that the Galois conjugates of $f$ are all contained in
an $\Fp$-vector space of dimension $N$. Consequently, the Galois group of $f$
(\emph{i.e.},  the Galois group of the extension of $\Fp(t)$ generated by $f$ and all its
Galois conjugates) canonically embeds, up to conjugacy, into $\GL_N(\Fp)$.

Our main result is stated as follows.

\begin{theorem}
\label{thm: modp}
Let $f \in \ZZ[\![\bt]\!]$ be an algebraic power series with degree $d$, total height $h$, and partial height $(h_1,\ldots,h_n)$.  Set
\begin{equation}
\label{eq:N}
N \coloneqq (d+1)\cdot \min \left \{ 
\prod_{i=1}^n (h_i+1) - 
\prod_{i=1}^n h_i
, 
\binom{n+h}{n}-\binom{h}{n} \right\}\,.
\end{equation}
Then, for every prime number $p$, $\Delta(f)_{|p}$ is annihilated by a linearized polynomial of $p$-degree at most $N$. 
In particular, $\Delta(f)_{|p}$ has degree at most $p^N-1$ over $\mathbb
F_p(t)$.  
\end{theorem}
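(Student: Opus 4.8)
The plan is to give a direct, elementary proof following the Furstenberg--Deligne circle of ideas, but organized so as to control the $p$-degree of the resulting linearized polynomial. The starting point is Furstenberg's classical observation, valid in characteristic $p$: if $g(\bt,y)$ is a polynomial over $\Fp(\bt)$ with $g(\bt,f)=0$, $f$ having no constant term, and with $\partial g/\partial y$ not vanishing at $(\mathbf 0,0)$, then the diagonal $\Delta(f)$ can be written as a residue,
\[
\Delta(f)(t) = \operatorname*{res}_{\bt}\left( \frac{H(\bt,t)}{\text{something}} \right),
\]
more precisely as the coefficient extraction of a rational function built from $g$; and in characteristic $p$ such diagonals/residues of rational functions are algebraic because the Cartier operator (the inverse Frobenius on differential forms, or on power series the map $\sum a_\bi \bt^\bi \mapsto \sum a_{p\bi}\bt^\bi$) acts on a \emph{finite-dimensional} $\Fp(t^p)$-vector space of rational functions with a prescribed pole divisor. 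The $p$-degree $N$ we are after is precisely (a bound for) the dimension of the smallest Cartier-stable space containing the relevant rational function: once $\Delta(f)_{|p}$ lies in an $N$-dimensional $\Fp$-space stable under a suitable $\Fp$-linear (Frobenius-semilinear) operator, the $\Fp$-span of its iterates has dimension $\le N$, so $\Delta(f)_{|p}, \Delta(f)_{|p}^p,\dots,\Delta(f)_{|p}^{p^N}$ are $\Fp(t)$-linearly dependent, i.e. $\Delta(f)_{|p}$ is killed by a linearized polynomial of $p$-degree $\le N$.

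The key steps, in order, are the following. First, reduce to a power series $f$ with $f(\mathbf 0)=0$ and a minimal polynomial $E(\bt,y)$ with $E_y(\mathbf 0,0)\ne 0$ — this is a standard normalization (translate $y$, invert, clear denominators) that may change $d$ and the heights by controlled amounts; one must check that the stated bound accommodates these changes or, better, argue directly with the given $E$. Second, express $\Delta(f)$ à la Furstenberg as the ``diagonal part'' of a rational function $R(\bt) = y\, E_y(\bt,y)/E(\bt,y)$ under the substitution $t_i \leftarrow t$; this realizes $\Delta(f)$ as $\langle R\rangle$ where $\langle\,\cdot\,\rangle$ picks out the terms $\bt^{(i,\dots,i)}$. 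Third — the combinatorial heart — identify an explicit finite-dimensional $\Fp$-subspace $V$ of rational functions in $t$, built from the pole structure of $R$ and closed under the Cartier operator $\Lambda_p\colon h \mapsto \langle h \rangle^{(p)}$-type map, such that $\Delta(f)_{|p} \in V$; the dimension of $V$ is exactly what is bounded by $(d+1)$ times the number of lattice monomials $\bt^\bi$ with $0 < i_j \le h_j$ for the partial-height count, or $0 < i_1+\dots+i_n \le h$ for the total-height count — these are the two quantities $\prod(h_i+1)-\prod h_i$ and $\binom{n+h}{n}-\binom{h}{n}$ appearing in \eqref{eq:N}. Fourth, conclude via the linear-algebra argument above.

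Making Step three precise is where the real work lies, and it is the main obstacle. One has to track how the Cartier operator (equivalently: extracting the subsequence of coefficients indexed by multiples of $p$) interacts with the specific shape of the numerator $y E_y$ and denominator $E$ of the rational function whose diagonal we take, and show that iterating it never produces poles worse than those we started with nor numerators of larger support than a fixed monomial box (resp. simplex). The cleanest route is probably to work with the multivariate rational function \emph{before} specializing the $t_i$, set up a ``section'' operator $\Phi$ (a multivariate Cartier/transfer operator sending $\bt^\bi \mapsto \bt^{\bi/p}$ on the lattice $p\ZZ^n$ and to $0$ elsewhere, composed with multiplication by a fixed polynomial coming from $E$), verify that the $\Fp$-span of $\{\Phi^k(y E_y/E)\}_{k\ge 0}$ sits inside $\tfrac{1}{E}\cdot\Fp[\bt]_{\deg < \deg E}$ — which is finite-dimensional of the required size — and only at the end apply the diagonal specialization, which is $\Fp$-linear and cannot increase dimension. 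Taking the minimum of the two counting functions in \eqref{eq:N} just amounts to running the argument with either the ``product of partial heights'' box or the ``total height'' simplex as the ambient monomial support, whichever is smaller.
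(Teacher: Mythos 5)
Your high-level plan (finite-dimensional space stable under section/Cartier operators, followed by the standard Ore-style linear-algebra conclusion) is the right shape, and your fourth step coincides with the paper's. But there are two substantive issues.

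First, you take a genuinely different route from the paper in your Steps 2--3, and it is precisely the route the paper is designed to avoid. You propose to express $\Delta(f)$ as a diagonal of the rational function $y E_y/E$ (Furstenberg's formula), increasing the number of variables and then applying a Cartier/transfer operator on rational functions with poles along the hypersurface $E=0$. This is the approach of \cite{Fu67}, \cite{DL}, \cite{AB13}. The paper explicitly departs from it: rather than expressing the algebraic series as a \emph{diagonal} of a rational function, it expresses it as a \emph{residue} $\frac{P(\bt,f)}{E_y(\bt,f)}=\res\bigl(\tfrac{P(\bt,f+T)}{E(\bt,f+T)}\bigr)$, working directly in the ring $k[\![\bt]\!]$ with a single auxiliary formal variable $T$. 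The payoff is that the $T$-variable carries the $y$-direction of the Newton polytope without multiplying the number of genuine variables, so there is no inductive/recursive process to control. Your route is not wrong in principle, but it reintroduces the difficulty the paper sidesteps, and you would then have to re-derive, for rational functions in $n+1$ variables, a higher-dimensional analogue of Cartier's theorem on curves with a lattice-point bound matching \eqref{eq:N}. It is not at all clear that this gives the same $N$.

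Second, and more seriously, you identify your Step 3 as ``the main obstacle'' and ``where the real work lies'', and then you do not carry it out. This is the entire content of the theorem. The paper's mechanism for resolving it is a specific change of basis: for $f\in k[\![\bt]\!]$, the family $\bigl(\bt^{\br}(f+T)^s\bigr)_{(\br,s)\in\{0,\dots,p-1\}^{n+1}}$ is a basis of $K(\!(T)\!)$ over $K(\!(T)\!)^{\langle p\rangle}$ (Lemma~\ref{lem: basis}), from which one gets the commutation relation $S_{\br}\circ\res=\res\circ S_{f,\br,p-1}$ (Proposition~\ref{prop: res}) and the clean formula \eqref{Sr_P_Q} showing that $S_{\br}$ preserves the space $\{P(\bt,f)/E_y(\bt,f):\NP(P)\subset C'\}$ with $C'=\NP(E)+(-1,0]^{n+1}$. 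None of this appears in your proposal, and without some such device your tracking of ``poles'' and ``support'' through iterated Cartier operators has no mechanism to succeed. Two smaller points: your Step 1 normalization $E_y(\mathbf 0,0)\ne 0$ is unnecessary (the paper uses only separability in $y$, which holds automatically for the minimal polynomial) and could alter the heights in a way that invalidates the exact bound $N$; and your description of the lattice count is off --- the quantity $\prod(h_i+1)-\prod h_i$ counts tuples with $0\le a_i\le h_i$ and \emph{at least one coordinate equal to zero}, not tuples with $0<a_i\le h_i$, which is the set $\mathcal E_1$ in the proof of Theorem~\ref{thm:diag}.
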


Let us make a few comments about this result. 

\begin{remark}\label{rem:GLN} 
It follows from Theorem~\ref{thm: modp} and the  preceding remark, that, for every prime \( p \), the Galois group of \( \Delta(f)_{|p} \) embeds, up to conjugacy,  
into \( \mathrm{GL}_N(\mathbb{F}_p) \).  
The key point here is that \( N \) does not depend on \( p \).  
This observation naturally leads to more refined questions concerning uniformity with respect to 
$p$. In particular, one may ask whether the Galois groups of 
$\Delta(f)_{|p}$ arise by reduction modulo
$p$ from a unique group, or a finite number of groups, defined in characteristic zero. 
In a recent preprint \cite{CFV}, Caruso, F\"urnsinn, and Vargas-Montoya investigate this problem.
\end{remark}

\begin{remark}\label{rem: genus}
\tcm{%
We will deduce Theorem~\ref{thm: modp} from a slightly more precise
statement (see \cref{thm:gendiag}) in which the bound $N$ is expressed
in terms of the number of integer points in the Newton polytope of
$E(\bt,y)$.
Beyond the significant improvements this refined version may induce
in some cases, it also suggests a potential link between the optimal
bound $N$ and geometric invariants attached to the algebraic 
hypersurface of equation $E(\bt,y) = 0$.
Indeed, after the work of Baker~\cite{Baker1893}, Hodge~\cite{Hodge29}
and Khovanskii~\cite{Khovanskii78}, it is known that the geometric genus of the
aforementioned hypersurface is bounded from above by the number of integer
points in the Newton polytope of $E(\bt, y)$ and that equality holds
generically.
A natural question is then whether the degree of the minimal polynomial
of $\Delta(f)_{|p}$ is always at most $p^{g + O(1)}$ where $g$ is the
geometric genus of the underlying hypersurface. When $n = 1$, Bridy's approach
\cite{Br17} provides a positive answer to this expectation; in full
generality, however, the question remains open.}
\end{remark}

Let $k$ be a field of characteristic zero, and let $f(\bt)\in k[\![\bt]\!]$ be algebraic over $k(\bt)$. Then there exists a finitely generated $\ZZ$-algebra $Z$ such that 
$f(\bt)\in Z[\![\bt]\!]$, and, for any maximal ideal $\mathfrak p$ of $Z$, the quotient $Z/\mathfrak p$ is a finite field (see, e.g.,~\cite[p.~967]{AB13}).  
In this framework, we obtain, as a direct consequence of Theorem~\ref{thm:diag}, the following generalization of \cref{thm: modp}, in the spirit of  \cite[Thm.\ 1.4]{AB13}.  
A particularly interesting case occurs when \( k \) is a number field and \( Z = \mathcal{O}_{k,\mathfrak{p}} \), the localization of the ring of integers \( \mathcal{O}_k \) of \( k \) at \( \mathfrak{p} \).

\begin{theorem}
\label{thm: modp2}
Let $k$ be a feld of characteristic zero, $Z\subset k$ be a finitely generated $\ZZ$-algebra, and $f \in Z[\![\bt]\!]$ be algebraic over $k(\bt)$ with degree $d$, total height $h$, and partial height $(h_1,\ldots,h_n)$.  
Then, for every maximal ideal $\mathfrak p$ of~$Z$, $\Delta(f)_{|\mathfrak p}$ is annihilated by a linearized polynomial with coefficients in $\mathbb
(Z/\mathfrak p)(t)$ of $p$-degree at most $N$, where $p$ is the characteristic of $Z/\mathfrak p$ and $N$ is defined as in \eqref{eq:N}.  
In particular, $\Delta(f)_{|\mathfrak p}$ has degree at most $p^N-1$ over $\mathbb
(Z/\mathfrak p)(t)$.  
\end{theorem}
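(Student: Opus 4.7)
The strategy is to deduce \cref{thm: modp2} by reducing $f$ modulo $\mathfrak{p}$ and then invoking \cref{thm:diag}, which provides the analogous bound for algebraic power series over an arbitrary field of positive characteristic. The essential compatibility is the equality $\Delta(f)_{|\mathfrak{p}} = \Delta(f_{|\mathfrak{p}})$, so it suffices to control the algebraic invariants of the reduction $f_{|\mathfrak{p}} \in K[\![\bt]\!]$, where $K \coloneqq Z/\mathfrak{p}$. Since $Z$ is a finitely generated $\ZZ$-algebra and $\mathfrak{p}$ is maximal, $K$ is a finite field of characteristic $p$, and \cref{thm:diag} applies to $f_{|\mathfrak{p}}$ over $K$.

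Concretely, let $E(\bt, y) \in k[\bt, y]$ be the minimal polynomial of $f$ over $k(\bt)$, which realizes the stated degree $d$, total height $h$, and partial heights $(h_1, \ldots, h_n)$. After clearing denominators and, if necessary, multiplying by an appropriate element of $k^{\times}$, one may arrange that $E \in Z_{\mathfrak{p}}[\bt, y]$ with at least one coefficient being a unit in the local ring $Z_{\mathfrak{p}}$. Its reduction $\bar{E} \in K[\bt, y]$ is then nonzero, annihilates $f_{|\mathfrak{p}}$, and has the same partial and total degrees as $E$. Consequently $f_{|\mathfrak{p}}$ is algebraic over $K(\bt)$ with degree at most $d$, total height at most $h$, and partial heights bounded componentwise by $(h_1, \ldots, h_n)$.

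Applying \cref{thm:diag} to $f_{|\mathfrak{p}}$ over $K$ then yields a linearized polynomial with coefficients in $K(t)$ of $p$-degree at most $N$ annihilating $\Delta(f_{|\mathfrak{p}}) = \Delta(f)_{|\mathfrak{p}}$, and the bound $p^N - 1$ on the degree over $K(t)$ follows from the standard observation recalled just before the statement of \cref{thm: modp}. The only mildly delicate point in the reduction is the normalization step: because $Z$ need not be a unique factorization domain, one cannot extract a greatest common divisor of the coefficients of $E$ directly, so passing to the localization $Z_{\mathfrak{p}}$ (equivalently, rescaling $E$ by a suitable element of $k^{\times}$) is the cleanest way to guarantee that $\bar{E}$ does not vanish identically. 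Once that technicality is handled, the rest of the argument is a direct application of the stronger characteristic-$p$ statement, so no new ideas beyond those used for \cref{thm: modp} are required.
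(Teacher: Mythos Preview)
Your strategy is exactly the one the paper has in mind: the paper does not give a separate proof of \cref{thm: modp2} but simply presents it as a direct consequence of \cref{thm:diag}, via $\Delta(f)_{|\mathfrak p}=\Delta(f_{|\mathfrak p})$ together with a bound on the degree and heights of $f_{|\mathfrak p}$.

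There is, however, a gap in your normalization step. Passing to the localization $Z_{\mathfrak p}$ does not, in general, let you rescale $E$ so that some coefficient becomes a unit: $Z_{\mathfrak p}$ is a Noetherian local domain but need not be a valuation ring. For instance, with $Z=\ZZ[x]$ and $\mathfrak p=(p,x)$, a polynomial whose coefficient set is $\{p,x\}$ admits no rescaling by an element of $\mathrm{Frac}(Z)^\times$ that lands in $Z_{\mathfrak p}[\bt,y]$ with a unit coefficient. So the assertion ``$\bar E\neq 0$'' is not justified as written. (There is also a smaller point you pass over: that $E\in k[\bt,y]$ is a $k^\times$-multiple of a polynomial with coefficients in $\mathrm{Frac}(Z)$. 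This is true because $f\in Z[\![\bt]\!]$, but it deserves a word of explanation.)

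A clean fix avoids rescaling altogether. The condition $P(\bt,f)=0$, for $P$ ranging over the finite-dimensional space of polynomials with $\deg_y\le d$, total $\bt$-degree $\le h$, and $\deg_{t_i}\le h_i$, is a linear system in the coefficients of $P$ whose matrix has entries in $Z$. Since this system has a nontrivial solution over $\mathrm{Frac}(Z)$, every maximal minor vanishes in $Z$, hence in $K=Z/\mathfrak p$, so the reduced system has a nontrivial solution over $K$. This produces a nonzero $\bar Q\in K[\bt,y]$ annihilating $f_{|\mathfrak p}$ with the required degree and height bounds, after which your application of \cref{thm:diag} (using the evident monotonicity of $N$ in $d,h,h_i$) goes through unchanged.
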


\subsection{Comparison with previous works}\label{sec:comp}

The proof of Furstenberg's theorem presented in \cite{Fu67} is direct and elementary, yielding a polynomial upper bound \( d_p < p^N \) when \( d = 1 \) and \( n \) is arbitrary, 
 with  \( N \) having a value similar to that in Theorem~\ref{thm: modp} in this case.  
Deligne~\cite[p.\ 140]{De84} subsequently treated the case \( n = 2 \) and arbitrary \( d \), obtaining the bound $d_p=O(p^N)$, where $N$ is expressed in terms of geometric quantities associated with the underlying algebraic power series. Another approach, which traces its origin to \cite{CKMR},  relies on the fact that any algebraic power series is annihilated by a linearized polynomial. 
This method has been used independently by several authors~\cite{DL,SW,Ha88} (and also~\cite{Sa87,Sa86} in some special cases) to give an elementary proof of Deligne's theorem.
Harase~\cite{Ha89} (see also~\cite{AB12}) later showed that this approach yields a doubly exponential bound, namely 
 \( d_p=O(p^{p^N}) \), for arbitrary $d$ and $n$. 

\tcm{As noted earlier, the first general polynomial bound, {\it i.e.}, in \( O(p^N) \), was established in \cite{AB13}, where the proof provides an effective \( N \) that depends only on the degree \( d \) and the total height \( h \) of~$f$\footnote{Using the approach from \cite{AB13}, it is also possible to deduce the existence of such a polynomial bound from another result in \cite{DL}, but with an ineffective constant \( N \).}.
However, before this paper, the best upper bound for $N$ was a \emph{nonelementary} primitive recursive bound --precisely, a tower of exponentials with height (at least) linear in the number~$n$ of variables.
The reason is that, when \( d > 1 \), the value of \( N \) becomes exceedingly large due to a recursive procedure involving resultants. For instance, even when the number of variables is \( n = 2 \), the estimate for \( N = N(2,d,h) \) in 
\cite[Theorem~6.1]{AB13}
takes the form
\[
2^{2^{2^{4^{hd^2}}}}.
\]
This can be seen by carefully analyzing the quantities appearing in the proof of Theorem~6.1, together with the estimates of Lemmas~6.1, 6.2 and 6.3 in~\cite{AB13}. Indeed, $N(2,d,h)$ grows like $N(1,d_1,h_1) \sim 4 h_1 d_1^3$, where $d_1$ grows like $d_0^{d_0 M^2}$ and $h_1$
grows like $d_0^{d_0 \cdot M^{2d_0^{M^2}}}$, where $M \sim 4hd^3$ and $d_0 \sim d^{4^{hd^2}}$,
and these estimates altogether imply that $\log_2\log_2\log_2\log_2 (h_1)$ grows like $2hd^2$.
}

The approach we adopt here is analogous to that in \cite{AB13}, but instead of expressing algebraic power series as diagonals of rational functions, we express them as formal residues of rational functions. This formulation, which was first employed in \cite{BCCD} for univariate algebraic power series, is a variant of Furstenberg's formula analogous to Lagrange's  formula for the residues of rational functions with simple poles (see Lemma~\ref{lem: furstenberg}). The primary advantage of this method is that it avoids any inductive process.  

Beyond diagonals of algebraic power series, other significant  families of $G$-functions in
 $\mathbb Q[\![t]\!]$ have algebraic reductions modulo $p$ (cf.\ \cite{ABD,VM21}).
Moreover, the property of \emph{algebraicity modulo 
$p$} provides a powerful tool for establishing results on the transcendence and algebraic independence of power series in characteristic zero  (cf.\ \cite{SW89,AGBS,AB13,ABD,VM23}).

\subsection{Organization of the article} 

This article is a condensed version of an unpublished preprint made available on arXiv in 2023 (cf.\ \cite{ABC}). 
In \cref{sec: christol}, we establish \cref{thm: main}, which provides a sharp, quantitative multivariate extension of Christol's theorem (see \cite{Ch79,CKMR}) concerning algebraic power series with coefficients in finite fields. Our result also generalizes its extension to perfect ground fields of positive characteristic, obtained independently in \cite{DL,SW,Ha88}. In Section \ref{sec: diag}, we derive our quantitative version of Deligne's theorem as a consequence of \cref{thm: main}, from which \cref{thm: modp} follows directly. 
We emphasize that \cref{thm: main} is of independent interest and has several further applications,  discussed in detail in \cite{ABC}, including the following:

\begin{itemize}

\item[{\rm (i)}] For a multivariate algebraic power series with coefficients in a finite field~$\mathbb F_q$, it provides 
an upper bound on the minimal number of states required for a $q$-automaton  to
generate its sequence of coefficients. This generalizes a result of Bridy \cite{Br17} to the multidimensional setting (see \cite[Sec.\ 4]{ABC}).

\item[{\rm (ii)}] For two multivariate algebraic power series over an arbitrary field of characteristic $p$, it establishes an upper bound on the algebraic degree of their Hadamard product and other related products. This significantly  improves the doubly exponential bounds that follow from \cite{DL,SW,Ha88} and  were made explicit by Harase \cite{Ha89} (see \cite[Sec.\ ~6]{ABC}). 

\item[{\rm (iii)}] It provides an efficient algorithm for computing the coefficient of a given multivariate algebraic power series  in $\mathbb F_q[\![\bt]\!]$  
at a specified multi-index.  The power series is encoded by 
its minimal polynomial over $\FF_q(\bt)$ along with a sufficient number of initial coefficients to ensure uniqueness,  
Again, this improves significantly upon previously known results (see \cite[Sec.\ 7]{ABC}). 

\end{itemize}

\section{A sharper multivariate Christol theorem}\label{sec: christol}

Let $\bk$ be a perfect field of characteristic $p>0$. 
Then the Frobenius endomorphism~$\F$, which maps $x$ to $x^p$, is an automorphism of $\bk$. 
Let $\bt = (t_1,\ldots,t_n)$ be indeterminates, and define $K_0\coloneqq k(\bt)$, $R \coloneqq k[\![\bt]\!]$ and $K  \coloneqq {\rm Frac}(R)$, the field of fractions of 
$R$.  
The Frobenius map $\F$ extends naturally to $K$ as a field homomorphism by setting $\F(t_i)=t_i^p$, for $1\leq i\leq n$, so that 
for a power series $f \coloneqq\sum_{\bi \in \mathbb N^n}a(\bi)\bt^{\bi}$ $\in
k[\![\bt]\!]$, we have
\[
\F(f) = \sum_{\stackrel{\bi \in \mathbb N^n}{\bi=(i_1,\dots ,i_n)}}a(\bi)^p\bt^{p\bi}\in k[\![\bt]\!] \,.
\]

We let~$K^{\langle p\rangle}$ denote the image of~$K$ by $\F$, so that $\F$ defines an isomorphism
between $K$ and~$K^{\langle p\rangle}$.
Then $K$ is a $K^{\langle p\rangle}$-vector space of dimension $p^n$, a basis being given by all
monomials of the form $\bt^{\br}\coloneqq t_1^{r_1}\cdots t_n^{r_n}$, with $\br
\coloneqq(r_1,\ldots,r_n)\in \{0,\ldots,p-1\}^n$.
Thus, every $f\in K$ has a unique expansion of the form 
\begin{equation}
\label{eq: basis1}
	f= \sum_{\br \in \{0,\ldots,p-1\}^n}\bt^{\br} f_{\br}\,, \quad \mbox{ where }f_{\br}\in K^{\langle p\rangle}.
\end{equation}
\begin{definition}\label{def:secop}
For every $\br\in \{0,\ldots,p-1\}^n$, the \emph{section operator} $S_{\br}$ is the map from $K$ into itself defined by 
\begin{equation}
\label{eq: section1}
S_{\br}(f)\coloneqq \F^{-1}(f_{\br})\,.
\end{equation} 
For a power series $f \coloneqq\sum_{\bi \in \mathbb N^n}a(\bi)\bt^{\bi}$ $\in
k[\![\bt]\!]$, we have
\[
S_{\br}(f) = \sum_{\stackrel{\bi \in \mathbb N^n}{\bi=(i_1,\dots ,i_n)}}a(pi_1+r_1,\ldots,pi_n+r_n)^{1/p}\bt^{\bi}\in k[\![\bt]\!] \,.
\]
We let $\Omega_n$ denote the monoid generated by all section operators under composition.
\end{definition}

The section operators are also sometimes referred to as \emph{Cartier operators} (see, for instance,~\cite{AB13}). 
Recall that they are $k$-linear and satisfy, for all $f,g\in K$, the relation 
\begin{equation}\label{eq:fsecop}
S_{\br}(f\F(g))=S_{\br}(f)g\,.
\end{equation}
Moreover, Equality~\eqref{eq: basis1} can be equivalently written as 
$$
f= \sum_{\br \in \{0,\ldots,p-1\}^n}\bt^{\br} \F(S_{\br}(f))\,,
$$
for all $f\in K$. 

\begin{definition}\label{def: genus} The \emph{Newton polytope} (or,
\emph{Newton polyhedron}) $\NP(A)$ of a multivariate polynomial
 \[A  \coloneqq \sum_{\substack{\bi \in \NN^n \\ j \in \NN}}
  a_{\bi,j} \bt^{\bi} y^j \in \bk[\bt,y]\]
is defined as the convex hull in $\RR^{n+1}$ of the tuples $(\bi,j)$ such that $a_{\bi,j} \neq 0$.
\end{definition}

The main result of this section is stated as follows.   

\begin{theorem} \label{thm: main}
Let $\bk$ be a perfect field of characteristic $p$ and let $\bt=(t_1,\ldots,t_n)$.
Let $A(\bt,y)$ be a nonzero polynomial in $\bk[\bt,y]$, and let $f\in \bk[\![\bt]\!]$ satisfy the
algebraic relation $A(\bt,f)=0$.
Define 
\[ C  \coloneqq \NP(A)+ (-1,0]^{n+1} \,. \]
Then, there exists a $\bk$-vector space $W\subset \bk[\![\bt]\!]$ of dimension at most
\begin{equation*}\label{eq: NP} \Card(C \cap \NN^{n+1})
\end{equation*}
which contains $f$ and  is invariant under the action the monoid $\Omega_n$
of all section operators.
\end{theorem}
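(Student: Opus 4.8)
The plan is to build the vector space $W$ explicitly as the $\bk$-span of a family of rational functions extracted from the algebraic relation $A(\bt,f)=0$, in the spirit of Furstenberg's formula but arranged so that the section operators act on a controlled set of numerators. First I would normalize: write $E(\bt,y)$ for the minimal polynomial (up to the passage from $A$ to one of its irreducible factors annihilating $f$, which only shrinks $\NP$), and consider the formal expression $y\,\partial_y E(\bt,y)/E(\bt,y)$ expanded as a power series in $\bt$ whose coefficients are polynomials in $y$; evaluating the residue-type identity at $y=f$ recovers $f$ (this is the content of the residue form of Furstenberg's formula alluded to as \cref{lem: furstenberg}). The key structural point is that all the rational functions one needs to track have the shape $B(\bt,f)/E(\bt,f)^m$ with $B$ a polynomial whose Newton polytope is constrained by that of $E$.

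The heart of the argument is the stability computation. I would show that the $\bk$-vector space $W$ spanned by functions of the form $B(\bt,f)/E(\bt,f)$, where $B$ ranges over polynomials whose exponent vectors $(\bi,j)$ lie in the region $C=\NP(E)+(-1,0]^{n+1}$ intersected with $\NN^{n+1}$, is closed under every section operator $S_{\br}$. Concretely, given such a generator one writes $B(\bt,f)/E(\bt,f) = B(\bt,f)\,E(\bt,f)^{p-1}/E(\bt,f)^p$; since $E(\bt,f)^p = \F(E^{\F^{-1}}(\bt^{1/p},f))$ — i.e. the denominator becomes a $p$-th power up to Frobenius twist on the coefficients, using perfectness of $\bk$ — the relation \eqref{eq:fsecop} lets one pull the denominator out of $S_{\br}$, leaving $S_{\br}$ applied to a polynomial numerator $B\cdot E^{p-1}$ evaluated at $f$, divided by a single copy of (a Frobenius-untwisted) $E$. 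One then reduces the resulting numerator modulo the relation $E(\bt,f)=0$ to bring its $y$-degree below $\deg_y E$, and performs the section on the $\bt$-monomials. The Newton-polytope bookkeeping is exactly what forces the numerator exponents back into $C$: the degrees in $\bt$ of $B\cdot E^{p-1}$ are at most $p$ times those of a point of $C$, so after applying $S_{\br}$ (which roughly divides $\bt$-exponents by $p$ and rounds down) and reducing the $y$-degree, the new exponent vector again lies in $\NP(E)+(-1,0]^{n+1}$. The dimension bound is then immediate: $W$ is spanned by the (finitely many) generators indexed by $C\cap\NN^{n+1}$, so $\dim_{\bk} W \le \Card(C\cap\NN^{n+1})$, and $f\in W$ by the residue identity.

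The main obstacle I anticipate is making the Newton-polytope estimate tight and uniform in $p$. Two subtleties need care. First, when one multiplies $B$ by $E^{p-1}$ one must verify that reducing modulo $E(\bt,f)=0$ — i.e. dividing by $E$ in $\bk[\bt,y]$ — does not enlarge the $\bt$-degrees in a way that survives the subsequent division by $p$; here one uses that the division produces a quotient whose Newton polytope is contained in $\NP(B\cdot E^{p-1}) - \NP(E)$ up to the relevant faces, so the $(p-1)$ copies of $\NP(E)$ are essentially cancelled and only an $O(1)$ slack (the "$+(-1,0]^{n+1}$" shift) remains after the $S_{\br}$-contraction. Second, one must handle the effect of $S_{\br}$ on exponents that are not divisible by $p$ — the floor in $\lfloor(i_\ell - r_\ell)/p\rfloor$ is what produces the half-open shift by $-1$ in each $\bt$-coordinate, and one must check the $y$-coordinate shift $-1$ as well, which comes from the single power of $E$ remaining in the denominator (a generator $B/E$ corresponds, via partial fractions / the residue identity, to exponents strictly below $\NP(E)$ in the $y$-direction, hence the $(-1,0]$ in the last coordinate too). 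Verifying that these two shifts combine to give exactly $\NP(E)+(-1,0]^{n+1}$ and not something larger, and that this region is genuinely $\Omega_n$-stable under composition (not just under a single $S_{\br}$), is the technical core; once the single-operator stability of the span of the $C\cap\NN^{n+1}$-indexed generators is established, stability under the whole monoid $\Omega_n$ follows formally since $W$ is already the full span of that finite generating set.
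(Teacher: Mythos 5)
Your high-level strategy — form the span of $P(\bt,f)$ divided by a fixed denominator, with $\NP(P)$ constrained to lie in $C$, and show $\Omega_n$-stability by writing the fraction as (polynomial)$\cdot E^{p-1}/E^p$ and using perfectness to pull the $p$-th power out — is exactly the right one, and it is what the paper does. However, as written the proposal has a gap that breaks the whole computation: your generators are $B(\bt,f)/E(\bt,f)$, and $E(\bt,f)=0$. Consequently every quantity in your stability argument ($B(\bt,f)/E(\bt,f)$, $B(\bt,f)E(\bt,f)^{p-1}/E(\bt,f)^p$, ``$S_{\br}$ applied to $B\cdot E^{p-1}$ evaluated at $f$ divided by a single copy of $E$'') is an undefined $0/0$ or, worse, identically zero (since $E(\bt,f)^{p-1}$ already vanishes when $p\geq 2$). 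The correct denominator is $E_y(\bt,f)$, which is nonzero by separability of the minimal polynomial (\cref{lem:separable}), and the way to get there is not by direct evaluation at $y=f$ but by introducing an auxiliary variable $T$ and taking the formal residue: \cref{lem: furstenberg} gives $\res\bigl(P(\bt,f+T)/E(\bt,f+T)\bigr)=P(\bt,f)/E_y(\bt,f)$, and all the manipulations with $E^{p-1}/E^p$ then take place with $E(\bt,f+T)$, which is a unit times $T$.

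The second gap is the ``reduce modulo $E(\bt,f)=0$ to bring the $y$-degree below $\deg_y E$'' step. Besides the fact that $B\cdot E^{p-1}\equiv 0\pmod E$, so that reduction literally annihilates the numerator, Euclidean division by $E$ in $\bk[\bt,y]$ is not a clean operation when $E$ is not monic in $y$: one needs pseudo-division by the leading coefficient in $\bt$, and the Newton polytope of the remainder is \emph{not} controlled by ``$\NP(B\cdot E^{p-1})-\NP(E)$.'' The paper avoids this altogether by the device you are missing: it replaces the standard monomial basis $\bt^{\br}T^s$ of $K(\!(T)\!)$ over $K(\!(T)\!)^{\langle p\rangle}$ by the adapted basis $\mB_f=\bigl(\bt^{\br}(f+T)^s\bigr)_{(\br,s)}$ (\cref{lem: basis}) and introduces the twisted section operators $S_{f,\br,s}$ with the commutation rule $S_{\br}\circ\res=\res\circ S_{f,\br,p-1}$ (\cref{prop: res}). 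Then defining $Q$ by $Q(\bt,f+T)=S_{f,\br,p-1}\bigl(P\cdot E^{p-1}(\bt,f+T)\bigr)$ produces a polynomial $Q$ \emph{directly}, with no division step, and the key observation is that $(\bi,j)\in\operatorname{supp}(Q)$ forces $(p\bi+\br,\,pj+p-1)\in\NP(P\cdot E^{p-1})\subset p\cdot\NP(E)+(-1/p,0]^{n+1}$; dividing by $p$ then lands $(\bi,j)$ back in $\NP(E)+(-1,0]^{n+1}$. This is what makes the Newton-polytope bookkeeping tight and uniform in $p$. Without the twisted basis and the residue formalism, I do not see how your proposal can be completed.
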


\begin{remark}\label{rem: thm}
The plus sign in the definition of $C$ refers to the Minkowski sum. 
In 
\cref{thm: main}, the field $\bk$ must be perfect for the section operators to be well-defined; however, 
this does not present a real limitation. Indeed, replacing an arbitrary field of characteristic $p$ by its
perfect closure does not affect our results (cf.\  Proposition~\ref{prop: descente}). 

In the case where $k$ is a finite field, it follows that the orbit of $f$ under $\Omega_n$ is finite, which implies that the sequence of coefficients of $f$ is 
generated by a finite $p$-automaton.  Consequently, \cref{thm: main} extends and refines Christol's theorem.  
Furthermore, \cref{thm: main} carries a geometric flavor (cf.\  \cref{rem: genus}), analogous to the more recent result established by
Bridy~\cite{Br17} in the case where $k$ is a finite field and $n=1$ (see \cite[Sec.\ 4]{ABC} for
a detailed discussion).
\end{remark}

\subsection{Preliminary results} 

We begin by establishing three auxiliary results: Lemmas \ref{lem: basis} and \ref{lem: furstenberg}, and Proposition \ref{prop: res}. 
These results correspond, respectively, to natural extensions in our framework of Lemma~2.4, Lemma~2.3, and Proposition~2.5 from \cite{BCCD}.

\subsubsection{Section operators on $K(\!(T)\!)$}

Consider a new indeterminate~$T$. Then $\F$ extends to a field homomorphism from
$K(\!(T)\!)$ into itself by setting $\F(T)=T^p$. 
We let $K(\!(T)\!)^{\langle p\rangle}$ denote the image of $K(\!(T)\!)$ under $\F$. As before, $K(\!(T)\!)$ is
a $K(\!(T)\!)^{\langle p\rangle}$-vector space of dimension $p^{n+1}$, with a basis given by all 
monomials of the form $\bt^{\br}T^s$, where $\br \in \{0,\ldots,p-1\}^n$ and $0\leq s\leq p-1$.
However, for our purposes, it will be more convenient to replace this standard basis with a more suitable one, adapted to a given  $f\in R$. 
 
\begin{lemma}
\label{lem: basis}
For any $f\in R$, the family 
\[
\mB_f \colonequal \left(\bt^{\br}(f+T)^s\right)_{(\br,s)\in \{0,\ldots,p-1\}^{n+1}}
\]
is a basis of $K(\!(T)\!)$ as a $K(\!(T)\!)^{\langle p\rangle}$-vector space. 
\end{lemma}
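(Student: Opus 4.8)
The strategy is to exhibit an explicit linear isomorphism between $K(\!(T)\!)$ equipped with its standard $K(\!(T)\!)^{\langle p\rangle}$-basis $(\bt^{\br}T^s)$ and the proposed family $\mB_f$, and to check that this map is invertible. The key observation is that, since $f \in R = k[\![\bt]\!]$, we have $\F(f) \in K(\!(T)\!)^{\langle p\rangle}$ is not quite what we need; rather, what matters is that $(f+T)^p = \F(f) + T^p = \F(f+T)$ in characteristic $p$, so $f+T$ plays, relative to $T$, exactly the role a "new uniformizer" would play. Concretely, I would argue that the $K(\!(T)\!)^{\langle p\rangle}$-subspace spanned by $\bigl(T^s\bigr)_{0 \le s \le p-1}$ coincides with the one spanned by $\bigl((f+T)^s\bigr)_{0 \le s \le p-1}$: indeed, expanding $(f+T)^s = \sum_{j=0}^{s} \binom{s}{j} f^{s-j} T^j$ and noting that each $f^{s-j} \in R \subset K$ expands in the $\bt^{\br}$-basis over $K^{\langle p\rangle} \subset K(\!(T)\!)^{\langle p\rangle}$, one sees the change-of-basis matrix from $(\bt^{\br}T^s)$ to $(\bt^{\br}(f+T)^s)$ is block triangular in $s$ with $1$'s on the diagonal (the coefficient of $T^s$ in $(f+T)^s$ is $1$), hence invertible over $K(\!(T)\!)^{\langle p\rangle}$.

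\textbf{Carrying it out.} First I would record that $\F$ does extend to $K(\!(T)\!)$ as stated, and that $K(\!(T)\!)$ is free of rank $p^{n+1}$ over $K(\!(T)\!)^{\langle p\rangle}$ with the monomial basis $(\bt^{\br}T^s)_{(\br,s) \in \{0,\dots,p-1\}^{n+1}}$; this follows by combining the rank-$p^n$ statement for $K/K^{\langle p\rangle}$ with the rank-$p$ statement for $K(\!(T)\!)/K(\!(T)\!)^{\langle p\rangle}$ relative to $T$, and using $K(\!(T)\!)^{\langle p\rangle} = K^{\langle p\rangle}(\!(T^p)\!)$. Second, I would fix the total order on $\{0,\dots,p-1\}^{n+1}$ refining the partial order by the last coordinate $s$, and write the transition matrix $M$ expressing $\mB_f$ in terms of the standard basis. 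Using $(f+T)^s = \sum_{j=0}^s \binom sj f^{s-j} T^j$ and the expansion $f^{s-j} = \sum_{\br} \bt^{\br}\,\F(g_{\br,s,j})$ with $g_{\br,s,j} \in R$ (which exists because $f^{s-j} \in R$ and $R$ admits such section decompositions, as in \eqref{eq: basis1}), one gets that $M$ is block lower-triangular with respect to the $s$-grading, and each diagonal block (the $s=s$ block, coming from the $j=s$ term) is the identity since $\binom ss = 1$ and $f^0 = 1$. Hence $\det M = 1 \neq 0$ in $K(\!(T)\!)^{\langle p\rangle}$, so $\mB_f$ is a basis.

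\textbf{Main obstacle.} The only genuinely delicate point is making precise that the expansion $f^{s-j} = \sum_{\br} \bt^{\br}\,\F(g_{\br,s,j})$ lands with $g_{\br,s,j}$ \emph{in $R$} (not merely in $K$), so that the resulting coefficients of $\mB_f$ really do lie in $K(\!(T)\!)^{\langle p\rangle}$ and the block-triangularity is an identity of matrices over $K(\!(T)\!)^{\langle p\rangle}$ rather than over the larger field $K(\!(T)\!)$; this is exactly the content of the section-operator formula applied to the power series $f^{s-j} \in k[\![\bt]\!]$, i.e.\ $g_{\br,s,j} = S_{\br}(f^{s-j})$, and one must invoke that $S_{\br}$ preserves $R$. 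Granting that — which is immediate from the power-series formula for $S_{\br}$ in \cref{def:secop} — the rest is bookkeeping: one checks that the $p^{n+1}$ elements of $\mB_f$ are $K(\!(T)\!)^{\langle p\rangle}$-linearly independent because the unitriangular matrix $M$ is invertible, and they span because there are exactly $p^{n+1} = \dim_{K(\!(T)\!)^{\langle p\rangle}} K(\!(T)\!)$ of them. This gives the claim.
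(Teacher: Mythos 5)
Your proof is correct and is essentially the same approach as the paper's. The paper also runs on the binomial identity linking $(T^s)_s$ and $((f+T)^s)_s$, but in the reverse direction: it writes $T^s=(T+f-f)^s=\sum_{i=0}^{s}\binom{s}{i}(-f)^{s-i}(T+f)^{i}$ to show that $\mB_f$ spans, then concludes by cardinality (a generating family of size $p^{n+1}$ in a space of dimension $p^{n+1}$ is a basis). Your transition-matrix/unitriangularity argument is the same underlying observation (the change of basis between the two families is triangular in $s$ with $1$'s on the diagonal), just packaged as an explicit determinant computation instead of the spanning-plus-cardinality shortcut. Neither route is better; the paper's is a bit shorter to state.

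One remark on your "main obstacle" paragraph: you worry that the expansion $f^{s-j}=\sum_{\br}\bt^{\br}\,\F(g_{\br,s,j})$ must land with $g_{\br,s,j}\in R$, and invoke $S_{\br}(R)\subset R$. That is true but stronger than needed. For the transition matrix $M$ to be a matrix over $K(\!(T)\!)^{\langle p\rangle}$ you only need $g_{\br,s,j}\in K$, which is automatic since this is just the expansion of the element $f^{s-j}\in K$ in the $K^{\langle p\rangle}$-basis $(\bt^{\br})$ of $K$; then $\F(g_{\br,s,j})\in K^{\langle p\rangle}\subset K(\!(T)\!)^{\langle p\rangle}$. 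The real point your argument correctly handles — and which the paper leaves implicit — is that the "coefficients" $\binom{s}{i}(-f)^{s-i}$ live in $K$ rather than in $K(\!(T)\!)^{\langle p\rangle}$, so one must further decompose them over the monomial basis to exhibit a genuine $K(\!(T)\!)^{\langle p\rangle}$-linear combination (equivalently, note that the $K(\!(T)\!)^{\langle p\rangle}$-span of $\mB_f$ is stable under multiplication by $K$, since any $g\in K$ is a $K^{\langle p\rangle}$-combination of the $\bt^{\br'}$).
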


\begin{proof}
First, we observe that $\mB_f$ is a generating family. Indeed, we can obtain $\bt^{\br}T^s$ as a
linear combination of $\bt^{\br}(f+T)^i$, for $0\leq i\leq s$. 
This follows from 
$$
T^s=(T+f-f)^s=\sum_{i=0}^{s} {s \choose i}(-f)^{s-i}(T+f)^{i} \,.
$$
Since $\mB_f$ has the same cardinality as the basis 
$\left(\bt^{\br}T^s\right)_{(\br,s) \in
\{0,\ldots,p{-}1\}^{n+1}}$, 
 it is also a basis of $K(\!(T)\!)$. 
\end{proof}

It follows that, given $f\in R$,  every $x\in K(\!(T)\!)$ has a unique expansion of the form 
\begin{equation}
\label{eq: basis2}
x= \sum_{\br \in \{0,\ldots,p-1\}^n} \bt^{\br} \sum_{s=0}^{p-1} (f+T)^s x_{f,\br,s}\,,  \quad \mbox{ where }x_{f,\br,s} \in K(\!(T)\!)^{\langle p\rangle}.
\end{equation}
For every $\br\in \{0,\ldots,p-1\}^n$ and $s\in\{0,\ldots,p-1\}$, we define the \emph{section 
operator} $S_{f,\br,s}$, from $K(\!(T)\!)$ into itself, by 
\begin{equation}
\label{eq: section2}
S_{f,\br,s}(x)\coloneqq \F^{-1}(x_{f,\br,s})\,.
\end{equation} 
One readily verifies that, for all $x,y\in K(\!(T)\!)$,
$$S_{f,\br,s}(xy^p) = S_{f,\br,s}(x\F(y)) = S_{f,\br,s}(x)y\,,$$ 
 for every $\br\in \{0,\ldots,p-1\}^n$ and  $s\in\{0,\ldots,p-1\}$.  
This definition and the above identity are analogous to Definition~\ref{def:secop} and Equation~\eqref{eq:fsecop}.
\subsubsection{A variant of Furstenberg's formula}

We define the residue map $\res$  from $K(\!(T)\!)$ to $K$ by setting
\[
\res \Bigg(\sum_{n\geq \nu}a_nT^n \Bigg) \coloneqq a_{-1} \,.
\]
Given a polynomial $A \in K[y]$, we let $A_y$ denote its derivative with respect to~$y$. The
following key lemma is  inspired by
\cite[Prop.~2]{Fu67} and analogous to Lagrange's  formula for the residues of rational functions with simple poles.   

\begin{lemma}
\label{lem: furstenberg}
Let $f \in K$ and $A(y)\in K[y]$.
Assume that $A(f)=0$ and $A_y(f)\not=0$.  
Then, for all $P\in K[y]$, one has 
\[
\res \left(\frac{P(f+T)}{A(f+T)}\right)=\frac{P(f)}{A_y(f)} \,\cdot
\] 
\end{lemma}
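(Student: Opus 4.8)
The plan is to expand $P(f+T)/A(f+T)$ as a Laurent series in $T$ over the field $K$ and read off the coefficient of $T^{-1}$. First I would use the hypothesis $A(f)=0$ together with the Taylor expansion of $A$ around $f$: writing $A(f+T)=\sum_{j\geq 0}\frac{A^{(j)}(f)}{j!}T^j$, the constant term vanishes, so $A(f+T)=T\cdot A_y(f)\cdot\bigl(1+T\cdot u(T)\bigr)$ for some $u(T)\in K[\![T]\!]$, using that $A_y(f)\neq 0$ to factor it out and that $A_y(f)$ is invertible in $K$. Hence $A(f+T)$ has a simple zero at $T=0$ and $\frac{1}{A(f+T)}=\frac{1}{A_y(f)}\cdot\frac{1}{T}\cdot\bigl(1+T\cdot u(T)\bigr)^{-1}$, where the last factor is a well-defined element of $K[\![T]\!]$ with constant term $1$.

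Next I would expand the numerator: $P(f+T)=\sum_{j\geq 0}\frac{P^{(j)}(f)}{j!}T^j=P(f)+T\cdot v(T)$ with $v(T)\in K[\![T]\!]$. Multiplying, one gets
\[
\frac{P(f+T)}{A(f+T)}=\frac{1}{A_y(f)}\cdot\frac{1}{T}\cdot\bigl(P(f)+T\cdot v(T)\bigr)\cdot\bigl(1+T\cdot u(T)\bigr)^{-1}.
\]
The factor $\bigl(P(f)+T\cdot v(T)\bigr)\cdot\bigl(1+T\cdot u(T)\bigr)^{-1}$ lies in $K[\![T]\!]$ and its constant term is exactly $P(f)$. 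Therefore the whole expression is $\frac{P(f)}{A_y(f)}\cdot\frac{1}{T}+(\text{terms of nonnegative $T$-degree})$, so its residue — the coefficient of $T^{-1}$ — equals $\frac{P(f)}{A_y(f)}$, which is the claim.

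The only mild subtlety to check is that all these manipulations make sense in $K(\!(T)\!)$: since $A(f+T)\in K[T]$ is nonzero (its leading coefficient in $T$ is the leading coefficient of $A$, which is a nonzero element of $K$), it is invertible in $K(\!(T)\!)$, and the factorization above is just the explicit computation of that inverse as a Laurent series. I do not expect any real obstacle here; the main point is simply to organize the Taylor expansions so that the simple-pole structure at $T=0$ is visible. One should also note that the characteristic-$p$ divided-power issue (division by $j!$) does not arise in an essential way because only the $j=0$ and $j=1$ terms matter for the residue: $A(f+T)=A(f)+A_y(f)T+T^2(\cdots)=A_y(f)T+T^2(\cdots)$ and $P(f+T)=P(f)+T(\cdots)$, and these identities hold over any commutative ring without dividing by anything.
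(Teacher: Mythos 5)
Your proof is correct and follows essentially the same route as the paper's: both exploit that $A(f+T)$ vanishes to first order at $T=0$ with linear coefficient $A_y(f)\neq 0$, factor out $T$, and read off the constant term of the remaining invertible power series, noting (as you do) that only the degree-$0$ and degree-$1$ coefficients in $T$ matter so no characteristic-$p$ factorial issue arises. The paper merely packages this a bit more tersely by first proving $\res(U/V)=U(0)/V_y(0)$ for a generic $V$ with $V(0)=0$, $V_y(0)\neq 0$, then specializing to $U=P(f+T)$, $V=A(f+T)$.
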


\begin{proof}
Let $U, V\in K[y]$ such that $V(0)=0$ and $V_y(0)\not=0$. Thus, we can express $V$ as $V=\sum_{n=1}^r a_ny^n$ with $a_1\not=0$. 
It follows that 
\[
\frac{U(T)}{V(T)} = \frac{1}{T}\cdot \left(\frac{U(T)}{a_1+a_2T+\cdots +a_rT^{r-1}}\right) 
\]
and consequently,  
\[
\res \left( \frac{U(T)}{V(T)} \right) = \frac{U(0)}{a_1} = \frac{U(0)}{V_y(0)}\,\cdot
\]
The result follows by applying the above equality to $U=P(f+T)$ and $V=A(f+T)$. 
\end{proof}


\subsubsection{Section operators and residues}

The next result establishes a fundamental commutation relation between taking residues and applying section operators. 
It is reminiscent of a result of Cartier involving the so-called Cartier operator \cite[Th.\ 4]{Car57} (see also \cite[Sec.\ 3]{Br17}).

\begin{proposition}
\label{prop: res}
For any $f\in R$ and $\br \in \{0, \ldots, p{-}1\}^n$, 
the following commutation relation holds over~$K(\!(T)\!)$:
\[S_{\br} \circ \res =  \res \circ S_{f,\br,p-1}\, .\]
\end{proposition}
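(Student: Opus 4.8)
The strategy is to unwind both sides on the explicit basis $\mathcal B_f$ and compare coefficients. Fix $f \in R$ and $\br \in \{0,\dots,p-1\}^n$. The starting observation is that for $x \in K(\!(T)\!)$, the residue $\res(x)$ only sees the coefficient of $T^{-1}$, while the section operator $S_{f,\br,p-1}$ extracts the ``$(\br,p-1)$-component'' of $x$ in the expansion \eqref{eq: basis2}. So I would first expand a general $x \in K(\!(T)\!)$ as in \eqref{eq: basis2}, namely
\[
x = \sum_{\br' \in \{0,\dots,p-1\}^n} \bt^{\br'} \sum_{s=0}^{p-1} (f+T)^s\, \F\bigl(S_{f,\br',s}(x)\bigr)\,,
\]
where each $S_{f,\br',s}(x) \in K(\!(T)\!)$ and hence $\F(S_{f,\br',s}(x)) \in K(\!(T)\!)^{\langle p\rangle}$ is a series in $T^p$ with coefficients that are $p$-th powers in $K$.

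\textbf{Key steps.} First I would compute $\res(x)$ from this expansion. Writing $S_{f,\br',s}(x) = \sum_m c_{\br',s,m} T^m$ with $c_{\br',s,m} \in K$, so that $\F(S_{f,\br',s}(x)) = \sum_m c_{\br',s,m}^{\,p} T^{pm}$, I expand $(f+T)^s = \sum_{a=0}^{s} \binom{s}{a} f^{s-a} T^a$ and read off the coefficient of $T^{-1}$: the monomial $\bt^{\br'} f^{s-a} c_{\br',s,m}^{\,p} \binom{s}{a}$ contributes to $\res(x)$ exactly when $a + pm = -1$. Since $0 \le a \le s \le p-1$, this forces $a = p-1$ (hence $s = p-1$, so the binomial coefficient is $1$) and $m = -1$. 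Therefore
\[
\res(x) = \sum_{\br' \in \{0,\dots,p-1\}^n} \bt^{\br'}\, c_{\br',p-1,-1}^{\,p} = \sum_{\br' \in \{0,\dots,p-1\}^n} \bt^{\br'}\, \F\bigl(\res(S_{f,\br',p-1}(x))\bigr)\,.
\]
Now I apply $S_{\br}$ to both sides. By definition $S_{\br}$ extracts the $\bt^{\br}$-component relative to the basis $(\bt^{\br'})_{\br'}$ of $K$ over $K^{\langle p\rangle}$ followed by $\F^{-1}$; since $\res(S_{f,\br',p-1}(x))^{\,p} = \F(\res(S_{f,\br',p-1}(x))) \in K^{\langle p\rangle}$ already, the defining relation \eqref{eq:fsecop} (in the form $S_{\br}(\bt^{\br'} \F(g)) = \delta_{\br,\br'}\, g$, which follows from uniqueness of the expansion \eqref{eq: basis1}) gives $S_{\br}(\res(x)) = \res(S_{f,\br,p-1}(x))$. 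This is exactly the claimed identity $S_{\br} \circ \res = \res \circ S_{f,\br,p-1}$ applied to $x$, and since $x$ was arbitrary we are done.

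\textbf{Main obstacle.} The computation itself is routine once the bookkeeping is set up; the only real subtlety is being careful that the expansion \eqref{eq: basis2} has coefficients in $K(\!(T)\!)^{\langle p\rangle}$, i.e. the $T$-expansion of $\F(S_{f,\br',s}(x))$ genuinely involves only powers $T^{pm}$, so that the Diophantine condition $a + pm = -1$ with $0 \le a \le p-1$ has the unique solution $a = p-1$, $m = -1$. One should also check that $S_{f,\br',s}(x) \in K(\!(T)\!)$ is a Laurent series (bounded below), which holds because it is obtained from $x \in K(\!(T)\!)$ by an $\F^{-1}$-twisted projection onto a finite-dimensional summand, so taking its residue is legitimate. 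No convergence or finiteness issue beyond this arises.
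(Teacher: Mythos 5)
Your proof is correct and follows essentially the same route as the paper's: expand $x$ in the basis $\mathcal B_f$, apply $\res$, use the fact that $\F(S_{f,\br',s}(x))$ has $T$-support in $p\ZZ$ to see that only $s=p-1$ and the $T^{p-1}$ term of $(f+T)^{p-1}$ survive, obtain $\res(x) = \sum_{\br'}\bt^{\br'}\F(\res(S_{f,\br',p-1}(x)))$, and then identify the coefficients via uniqueness of the expansion \eqref{eq: basis1}. Your write-up is a bit more explicit in spelling out the binomial expansion and the Diophantine condition $a+pm=-1$, but the argument is the same.
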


\begin{proof}
Let $x\in K(\!(T)\!)$.  By Equations~\eqref{eq: basis2} and~\eqref{eq: section2}, we have  
\[
x= \sum_{\br \in \{0,\ldots,p-1\}^n} \bt^{\br} 
	\sum_{s=0}^{p-1} (f+T)^s \, \F(S_{f,\br,s}(x)) \,.
\]
Hence, we obtain that  
\begin{equation}\label{eq: res1} \res (x)  = \sum_{\br \in \{0,\ldots,p-1\}^n}\bt^{\br} \, 
	\sum_{s=0}^{p-1} \res \Big( (f+T)^s \, \F(S_{f,\br,s}(x)) \Big)\,.
\end{equation}
Since $\F(S_{f,\br,s}(x)) \in K(\!(T)\!)^{\langle p\rangle}$, its support (that is, the set of indices corresponding to nonzero coefficients of 
$\F(S_{f,\br,s}(x))$ viewed as a Laurent series in the variable $T$) is contained in $p\mathbb Z$.  Therefore, we have 
\[
\res\Big( (f+T)^s \, \F(S_{f,\br,s}(x)) \Big)=0 \quad \mbox{ for } \quad 0\leq s\leq p-2\,,
\]
while, for $s=p-1$, 
\[
\res\Big( (f+T)^{p-1} \, \F(S_{f,\br,s}(x)) \Big)=\res\Big( T^{p-1} \, \F(S_{f,\br,s}(x)) \Big)=\F\big(\res \left( S_{f,\br,p-1}(x)\big) \right)\,.
\]
Substituting this into Equation~\eqref{eq: res1}, we obtain  
\[		
\res (x)  = \sum_{\br \in \{0,\ldots,p-1\}^n}\bt^{\br} \, \F\big(\res \left( S_{f,\br,p-1}(x)\big) \right) \,.
\]
Finally, applying \eqref{eq: basis1} and \eqref{eq: section1} with $f= \res (x)$, we deduce that 
\[
S_{\br} \circ \res (x)= \res \circ  S_{f,\br,p-1}(x) \,,
\]
as desired. 
\end{proof}

\subsection{Proof of \cref{thm: main}}

Let $E(\bt,y)\in \bk[\bt,y]$ denote the minimal polynomial of $f$
over $k(t)$, normalized such that its coefficients are globally coprime.

\begin{lemma}
\label{lem:separable}
The polynomial $E(\bt,y)$ is separable with respect to $y$. In particular, $f$ is a simple root of $E(\bt, y)$ with respect to $y$.
\end{lemma}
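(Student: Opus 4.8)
The plan is to argue by contradiction, exploiting the minimality of $E$ for divisibility together with the characteristic-$p$ phenomenon that inseparability in $y$ forces $E$ to be a $p$-th power in the variable $y$. First I would suppose that $E(\bt,y)$ is not separable with respect to $y$. Writing $E$ as a polynomial in $y$ over the UFD $R_0 \colonequal \bk[\bt]$, inseparability means that $\gcd(E, E_y) \neq 1$ in $k(\bt)[y]$, and since $f$ is a root of both $E$ and this gcd, minimality of $E$ (as the polynomial of least degree in $y$ annihilating $f$, up to normalization) forces $\deg_y \gcd(E,E_y) = \deg_y E$, hence $E \mid E_y$ in $k(\bt)[y]$. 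But $\deg_y E_y < \deg_y E$, so the only possibility is $E_y = 0$ identically.

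Next I would use $E_y = 0$ to conclude that $E(\bt,y) = G(\bt, y^p)$ for some polynomial $G$: indeed, $\partial E/\partial y = 0$ in characteristic $p$ means every exponent of $y$ occurring in $E$ is divisible by $p$. Now I would invoke perfectness of $\bk$: each coefficient $c(\bt) \in \bk[\bt]$ of $E$, viewed in the variable $y^p$, can be written as $c = \sum_{\bi} c_{\bi}\,\bt^{\bi}$, and replacing each $c_{\bi}$ by its unique $p$-th root $c_{\bi}^{1/p}$ and each $\bt^{\bi}$ by the corresponding $p$-th root monomial, one writes $c(\bt) = \tilde c(\bt)^p$ only when all exponents of $\bt$ in $c$ are also divisible by $p$ — which need not hold. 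So more care is needed: the right statement is that $E(\bt,y)$ lies in the image of the Frobenius $\F$ on $\bk[\bt,y]$ \emph{only if} all $\bt$-exponents are divisible by $p$ too; in general what we get is that $E$ is a polynomial in $t_1,\ldots,t_n,y$ all of whose monomials $\bt^{\bi} y^j$ have $j \in p\N$. The cleaner route is: from $E_y=0$ we get $E(\bt,y) = \sum_{\bj,\, k} a_{\bj,k}\, \bt^{\bj} y^{pk}$; since $\bk$ is perfect, write $a_{\bj,k} = b_{\bj,k}^p$ and note that $E(\bt,y) = \bigl(\sum_{\bj,k} b_{\bj,k}\,\bt^{\bj/?}\cdots\bigr)$ — again the $\bt$ exponents obstruct. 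I would therefore instead factor through the substitution $u_i = t_i$: observe $E(\bt,y)$ as an element of $\bk[\bt][y]$ equals $H(\bt, y^p)$ where $H(\bt,z) \colonequal \sum_{\bj,k} a_{\bj,k}\bt^{\bj} z^k \in \bk[\bt][z]$, and then $f^p$ is a root of $H(\bt, z)$, so $f^p$ is algebraic over $k(\bt)$ with $[\,k(\bt)(f^p):k(\bt)\,] \le \deg_z H = \tfrac1p \deg_y E < \deg_y E$. On the other hand, $\F\colon K \to K$ is injective and restricts to an isomorphism $k(\bt) \xrightarrow{\sim} k(\bt^p) \subseteq k(\bt)$, sending the minimal polynomial of $f$ over $k(\bt)$ to a polynomial of the same degree annihilating $f^p$ over $k(\bt^p)$, hence $[k(\bt^p)(f^p):k(\bt^p)] = \deg_y E$; combined with $k(\bt)^{} \supseteq k(\bt^p)$ this still allows the degree over $k(\bt)$ to drop, so this by itself is not yet the contradiction.

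The cleanest contradiction, and the one I would actually write, is internal to $\bk[\bt,y]$: from $E_y = 0$ and $\bk$ perfect, $E(\bt,y)$ is a $p$-th power in $\bk[\bt,y]$ \emph{after} extracting $p$-th roots of coefficients, \emph{provided} we also know the $\bt$-exponents are divisible by $p$ — so the honest statement is that $E = \F(\Phi)$ for some $\Phi \in \bk[\bt,y]$ is false in general, but $E(\bt,y)$ being non-constant and having $E_y = 0$ still makes it \emph{reducible} over $\bk[\bt,y]$ unless it is a constant times a power of an irreducible of the form $y^p - g(\bt)$ type — and here minimality bites: $E$ is chosen minimal for divisibility among annihilators of $f$, so in particular $E$ is irreducible in $\bk[\bt,y]$ (it is the minimal polynomial, hence irreducible over $k(\bt)$, hence by Gauss's lemma irreducible in $\bk[\bt,y]$ up to content, and the content is $1$ by the coprimality normalization). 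An irreducible polynomial in $\bk[\bt,y]$ with $E_y=0$ must have the form $E(\bt,y) = \sum a_{\bj}(\bt^{p})\, \cdot$ only a single term in $y$ after all reductions — concretely, $E = c \cdot (y^{p} \text{-free part})$, and irreducibility plus $\deg_y E \ge 1$ forces $\deg_y E = p$ and $E = y^p - g(\bt)$ with $g$ not a $p$-th power in $\bk(\bt)$ (else it factors), but then $f = g^{1/p}$ lies in $k(\bt)$ only if $g$ is a $p$-th power — consistent — so I must rule this out by the \emph{degree} comparison: $f$ then satisfies $y - g^{1/p} = 0$ over the perfect field $k(\bt)$... wait, $k(\bt)$ is not perfect. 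Good — so $E = y^p - g$ can genuinely be the minimal polynomial. Hence the correct conclusion is \emph{not} separability in this naive sense; the contradiction must instead come from the hypothesis $f \in \bk[\![\bt]\!]$: if $E = y^p - g(\bt)$ with $g \in \bk[\bt]$ then $f^p = g$, so comparing with $f = \sum a(\bi)\bt^{\bi}$ gives $\F(f) = g$, i.e. $\sum a(\bi)^p \bt^{p\bi} = g(\bt)$, forcing all exponents in $g$ divisible by $p$ and $a(\bi)^p = g_{p\bi}$; then $g = \F(h)$ with $h = \sum a(\bi)\bt^{\bi} = f$, so $g$ \emph{is} a $p$-th power in $\bk[\bt]$, namely $g = \F(f)$, contradicting irreducibility of $y^p - g$. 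That is the argument I would present: \textbf{(i)} $E$ irreducible in $\bk[\bt,y]$ with content $1$; \textbf{(ii)} if $E$ inseparable in $y$ then $E_y = 0$, so $E = H(\bt, y^p)$; \textbf{(iii)} $f^p$ is then a root of $H$, but since $f \in \bk[\![\bt]\!] = R$ and $\bk$ perfect, $f^p = \F(f)$ and one shows $H(\bt, y^p)$ factors as $\F(\tilde E)(\bt,y)$-times-correction using perfectness to extract $p$-th roots coefficient-wise \emph{and} of the $\bt$-monomials appearing (here perfectness of $\bk$ and the structure $y^p$ combine to let us take a genuine $p$-th root in $\bk[\bt,y]$ of the whole polynomial once we know $f$ itself is a power series), contradicting irreducibility unless $\deg_y E = 0$, which is absurd since $f \notin k(\bt)$ would be needed... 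I would streamline this in the writeup by citing Lemma~\ref{lem: basis}-style uniqueness: write $E(\bt,y) = \sum_{\br,s} \bt^{\br}(\text{stuff})$ is unnecessary; the slick finish is that $E = H(\bt,y^p)$ and $\bk$ perfect imply $H(\bt, y^p) = \F(\tilde H)(\bt,y) \cdot (\text{unit})$ is \emph{not} automatic, so the genuinely correct slick finish is the one via $f^p = \F(f)$ giving an annihilator $\tilde E$ of $f$ with $\tilde E(\bt,y) = \F^{-1}\text{-twist of } H$, $\deg_y \tilde E = \tfrac1p\deg_y E < \deg_y E$, contradicting minimality \emph{of the degree} of $E$. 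I expect the main obstacle to be exactly this bookkeeping: correctly using perfectness of $\bk$ to descend from the relation $H(\bt,f^p)=0$ to a lower-$y$-degree relation on $f$, i.e. showing that $\F$ being an automorphism of $\bk$ (not just of $\bk[\bt]$) lets us rewrite $H(\bt,f^p) = 0$ as $\F$ applied to a relation of the form $\tilde H(\bt', f) = 0$ after the substitution $t_i \mapsto t_i^{1/p}$ is harmless because $f$ is an honest power series — and then concluding $\deg_y \tilde H = \deg_y H < \deg_y E$ contradicts minimality of $E$. Once that degree drop is established the lemma follows, and the "in particular" clause is immediate: separability of $E$ in $y$ together with $E(\bt,f)=0$ gives $E_y(\bt,f) \neq 0$, so $f$ is a simple root.
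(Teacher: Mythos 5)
Your overall strategy is right—reduce to $E(\bt,y)=H(\bt,y^p)$ and then contradict minimality by producing a nonzero annihilator of $f$ of strictly smaller $y$-degree—but you never actually produce that annihilator. You correctly diagnose the obstruction yourself: the $\bt$-exponents in $H$ need not be multiples of $p$, so you cannot simply extract a global $p$-th root of $H$ in $\bk[\bt,y]$, and the naive substitution $t_i \mapsto t_i^{1/p}$ gives an annihilator over $\bk(t_1^{1/p},\ldots,t_n^{1/p})$, not over $\bk(\bt)$, which does not contradict minimality of $E$ over $\bk(\bt)$. You then label this the ``main obstacle'' and the ``bookkeeping'' and gesture at a ``$\F^{-1}$-twist'' finish without carrying it out; the long detour through the special case $E=y^p-g(\bt)$ is a red herring, since an inseparable irreducible $E$ need not have that form.

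The missing ingredient is exactly the section operator machinery already set up in the paper. Writing $H(\bt,z)=a_0(\bt)+a_1(\bt)z+\cdots+a_m(\bt)z^m$ with $a_m\neq 0$ and noting $f^{pi}=\F(f^i)$, the relation $H(\bt,f^p)=0$ reads $\sum_i a_i(\bt)\,\F(f^i)=0$. Applying $S_{\br}$ and using the identity $S_{\br}(a\,\F(b))=S_{\br}(a)\,b$ (Equation~\eqref{eq:fsecop}) gives $\sum_i S_{\br}(a_i(\bt))\,f^i = 0$, where each $S_{\br}(a_i(\bt))$ lies in $\bk[\bt]$—this is precisely what solves the $\bt$-exponent problem, because $S_{\br}$ scales $\bt$-exponents down by $p$ while $\F^{-1}$ acts only on the coefficients in the perfect field $\bk$. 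Finally, since $a_m\neq 0$ one can choose $\br$ with $S_{\br}(a_m(\bt))\neq 0$, yielding a nonzero annihilator of $f$ of $y$-degree $m < pm = \deg_y E$; that is the contradiction. Without this step your argument is incomplete.
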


\begin{proof}
Since $E(\bt, y)$ is defined as the minimal polynomial, it suffices to 
prove that $E(\bt, y)$ is not of the form $F(\bt, y^p)$ for some
polynomial $F(\bt,z)\in \bk[\bt,z]$. We proceed by contradiction, assuming that 
$E(\bt, y)=F(\bt, y^p)$ for some  $F(\bt,z)\in \bk[\bt,z]$. We can write 
\[
F(\bt, z) = a_0(\bt) + a_1(\bt) z + \cdots + a_m(\bt) z^m
\]
with $a_i(\bt) \in \bk[\bt]$ and $a_m(\bt) \neq 0$.
Let $\br \in \{0, \ldots, p{-}1\}^n$. Applying the section operator 
$S_{\br}$ to the identity $F(\bt, f^p) = 0$, we obtain
\[
S_{\br}\big(a_0(\bt)\big) + 
S_{\br}\big(a_1(\bt)\big) f + \cdots + 
S_{\br}\big(a_m(\bt)\big) f^m = 0\,.
\]
Since $a_m(\bt)$ is nonzero, there must exist an index $\br$ for
which $S_{\br}(a_m(\bt))\not=0$. 
For such an $\br$, we obtain a nonzero polynomial annihilating
$f$ with $y$-degree smaller than the $y$-degree of~$E$. This
contradicts the minimality of~$E$.
\end{proof}

\begin{proof}[Proof of \cref{thm: main}]
Let $E_y$ denote the partial derivative of
$E$ with respect to $y$. By \cref{lem:separable}, we have 
$E_y(\bt,f)\not=0$.
Furthermore, since $A$ annihilates $f$, it must be a multiple of $E$; that is, we can write $A
= E \cdot F$ for some  $F\in \bk[\bt,y]$.
Let $J$ be the interval $(-1,0]$ and define $C' \colonequal \NP(E) + J^{n+1}$.

We claim that the $\bk$-vector space 
\[
W \coloneqq \left\{ \frac{P(\bt,f)}{E_y(\bt,f)} : 
P\in \bk[\bt,y], \; \NP(P) \subset C' \right\} \subset K 
\]
contains $f$ and is invariant under the action of $\Omega_n$.
The fact that $f \in W$ follows from the observation that $\NP(y E_y) \subset \NP(E) \subset C'$.
Now, consider a tuple $\br \in \{0, 1, \ldots, p{-}1\}^n$ along with a polynomial $P \in
\bk[\bt,y]$ whose Newton polytope is a subset of $C'$.
We define $U \colonequal P \cdot E^{p-1}$, and let $Q \in \bk[\bt,y]$ be defined by 
\begin{equation}
\label{eq:formulaQ}
Q(\bt,f+T)  \coloneqq S_{f,\br,p-1}(U(\bt,f+T))\in K \,.
\end{equation}
By combining~\cref{lem: furstenberg} and \cref{prop: res}, we obtain:
\begin{eqnarray} \label{Sr_P_Q}
S_{\br}\left( \frac{P(\bt,f)}{E_y(\bt,f)}\right)& =& 
	S_{\br} \circ \res  \left( \frac{P(\bt, f+T)}{E(\bt,f+T)}\right)\\\nonumber
&=&  \res  \circ  S_{f,\br,p-1} \left( \frac{P(\bt, f+T)}{E(\bt,f+T)}\right)\\ \nonumber
&=& \res   \left( \frac{Q(\bt, f+T)}{E(\bt,f+T)}\right)\\\nonumber
&=& \frac{Q(\bt,f)}{E_y(\bt,f)} \,\cdot
\end{eqnarray}

To establish our claim, it just remains to prove that $\NP(Q) \subset C'$.
We recall the following standard fact about Newton polytopes: the formation of Newton polytopes is
compatible with products. Specifically,  for $A, B \in \bk[\bt,y]$, we have the relation
\[\NP(AB) = \NP(A) + \NP(B)\,.\]
Using this property, we can derive the following:  
\[\NP(U) \subset (p{-}1){\cdot}\NP(E) + C' = p{\cdot}\NP(E) + J^{n+1}.\]
Now, let $(\bi,j)$ be a tuple of exponents belonging to the support of $Q$, \emph{i.e.}, for which
the coefficient in $Q$ in front of $\bt^{\bi} y^j$ is nonzero.
From the definition of $S_{f,\br,p-1}$, it follows that $(p\bi + \br, pj + p-1)$ must lie in 
$\NP(U)$. 
Dividing by $p$ and defining $I \coloneqq (-\frac 1 p, 0]$, we obtain  
\[\textstyle
\left(\bi + \frac 1 p \br, \, j + \frac{p-1} p\right) \in \NP(E) + 
I^{n+1}\,. \]
Thus, we conclude that 
\[\textstyle
(\bi, j) \in \NP(E) + I^{n+1} + 
\left\{\left(-\frac 1 p \br, -\frac{p-1} p\right)\right\}
\subset \NP(E) + J^{n+1} = C'\,.\]
Finally, we have shown that $\NP(Q) \subset C'$, as desired. 

Clearly $W$ is spanned by the fractions of the form 
$\bt^{\bi} f^j/E_y(\bt,f)$, where $(\bi, j) \in C' \cap \NN^{n+1}$.
Hence, the dimension of $W$ is bounded from above by the cardinality of this set. 
Furthermore, we observe that $C = \NP(F) + C'$, where $F$ is nonzero. Since $\NP(F)$ is the Newton polytope of a nonzero polynomial, it must intersect 
$\NN^{n+1}$. Therefore, $C$ contains a translate of $C'$ by an element with nonnegative integer
coefficients.
As a result, the cardinality of $C\cap \NN^{n+1}$ is at least that of $C'\cap \NN^{n+1}$, and we
conclude that \[\dim_{\bk} W \leq \Card(C'\cap \NN^{n+1})\leq \Card(C\cap \NN^{n+1})\,,\] as desired.
\end{proof}

\section{Diagonals}\label{sec: diag}


By combining \cref{thm: main} with Propositions~5.1 and~5.2 of \cite{AB13}, we immediately  obtain an
effective version of Deligne's theorem: given an algebraic power series $f \in \bk[\![\bt]\!]$ of 
degree $d$ and total height $h$, its diagonal $\Delta(f)$ has degree at most $p^N$ (and height at
most $N p^N$), where $N$ is explicitly given by
\[
  N \coloneqq (d+1)\cdot \binom{n+h}{n}\,.
\]

\subsection{An effective version of Deligne's theorem}
In this section, we establish a refinement of the result stated above, from which \cref{thm: modp} follows directly.

\begin{theorem}
\label{thm:diag}
Let $k$ be an arbitrary field of characteristic $p$. 
Let $f \in \bk[\![\bt]\!]$ be an 
algebraic power series with degree $d$, total height $h$, and partial
height $\bh=(h_1,\ldots,h_n)$. Set
\begin{equation}\label{eq: N31}
N \coloneqq (d+1)\cdot \min \left \{ 
\prod_{i=1}^n (h_i+1) - 
\prod_{i=1}^n h_i
, 
\binom{n+h}{n}-\binom{h}{n} \right\}\,.
\end{equation}
Then, $\Delta(f)$ is annihilated by a linearized polynomial with coefficients in $k(t)$ of $p$-degree at most $N$. 
In particular, $\Delta(f)$ has degree at most $p^N-1$ over $k(t)$. 
\end{theorem}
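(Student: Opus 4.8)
The plan is to deduce \cref{thm:diag} from \cref{thm: main} together with the classical machinery (made explicit in \cite{AB13}) relating diagonals to the action of a single section operator $S_{\mathbf{p-1}} = S_{(p-1,\ldots,p-1)}$ on a finite-dimensional Frobenius-stable space. First I would recall Furstenberg's observation that $\Delta(f)$ can be recovered from $f$ by applying the section operator $S_{(p-1,\dots,p-1)}$: more precisely, if one writes $g(\bt) = \partial(f)$ for a suitable ``dehomogenization'' producing the coefficients $a(i,\dots,i)$ on the diagonal, then the diagonal sequence is obtained by iterating $S_{(p-1,\dots,p-1)}$ on a power series lying in the $\Omega_n$-stable space $W$ provided by \cref{thm: main}. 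Concretely, one uses that $\Delta$ intertwines the $1$-variable section operator $\sigma_{p-1}$ on $k[\![t]\!]$ with $S_{(p-1,\dots,p-1)}$ on $k[\![\bt]\!]$, so the orbit of $\Delta(f)$ under $\sigma_{p-1}$ is the image of a subspace of the orbit of $f$ (or of a related element) under the single operator $S_{(p-1,\dots,p-1)}$, hence spans a $k$-vector space $V \subseteq k[\![t]\!]$ of dimension at most $\dim_k W$.

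Next I would bound $\dim_k W$ using \cref{thm: main}: with $A = E$ the minimal polynomial, $\dim_k W \le \Card(C' \cap \NN^{n+1})$ where $C' = \NP(E) + (-1,0]^{n+1}$. The point is that $\NP(E)$ is contained in the box $\prod_{i=1}^n [0,h_i] \times [0,d]$ and also in the simplex $\{\,\sum_i x_i + y \le h+d\,\}$ intersected with the positive orthant — actually one should be slightly careful: the total height $h$ bounds the total degree in $\bt$ only, with $y$-degree separately $\le d$, so $\NP(E) \subseteq \big(\{\,\sum_{i} x_i \le h\,\} \cap \NN^n\big)\times[0,d]$. Shifting by $(-1,0]^{n+1}$ and counting integer points, the ``box'' estimate gives $\Card(C'\cap\NN^{n+1}) \le (d+1)\big(\prod_i(h_i+1) - \prod_i h_i\big)$ — the subtracted term accounting for the fact that the half-open shift in each of the $n$ $\bt$-directions removes the interior lattice layer — and the ``simplex'' estimate gives $(d+1)\big(\binom{n+h}{n} - \binom{h}{n}\big)$ by the analogous count of lattice points in a simplex minus its shrunk copy. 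Taking the minimum of the two yields exactly $N$ as in \eqref{eq: N31}. This is essentially bookkeeping with lattice-point counts, paralleling the computation sketched after \eqref{eq:formulaQ} in the proof of \cref{thm: main}.

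Once $\Delta(f)$ lies in a $\sigma_{p-1}$-stable $k$-vector space $V$ of dimension $\le N$, the final clause is standard $\mathbb F_p$-linear algebra over $k(t)$: iterating $\sigma_{p-1}$, or rather passing to the Frobenius-closure, one obtains that $\Delta(f)$, its Frobenius powers, and the relevant orbit all live in a $k(t)$-subspace (inside a finite extension) on which Frobenius acts, and the $\mathbb F_p$-span of the Frobenius orbit of $\Delta(f)$ has dimension $\le N$; this forces $\Delta(f)$ to satisfy a nontrivial $\mathbb F_p$-linear relation $c_0 X + c_1 X^p + \cdots + c_N X^{p^N} = 0$ with $c_i \in k(t)$, i.e. a linearized polynomial of $p$-degree $\le N$, whose roots form an $\mathbb F_p$-space of dimension $\le N$ and hence whose ordinary degree is $\le p^N$ — in fact $\le p^N-1$ since $\Delta(f)$ itself is one of at most $p^N-1$ nonzero roots... more precisely because one may divide out by $X$ if $c_0 \ne 0$, or note the relation can be taken with no constant term subtracted off, giving degree $\le p^N - 1$. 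The main obstacle I anticipate is not any of these steps individually but making the first step fully rigorous: transferring the $\Omega_n$-stability of $W \subseteq k[\![\bt]\!]$ from \cref{thm: main} into a genuine finite-dimensional $\sigma_{p-1}$-stable space in $k[\![t]\!]$ containing $\Delta(f)$, i.e. checking that the diagonal map carries the orbit $\{T^{w}(f) : w \in \Omega_n\}$-span into the $\sigma_{p-1}$-orbit-span of $\Delta(f)$ and does not lose or inflate dimension; this is exactly where Propositions~5.1 and~5.2 of \cite{AB13} are invoked, and one must verify their hypotheses apply verbatim to our $W$ (which they do, since those propositions only use $\Omega_n$-stability and finite dimension, not the specific shape of $W$).
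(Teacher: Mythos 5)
Your proposal contains a genuine gap in the lattice-point count, and this gap is not incidental—it is exactly where the paper's argument does its real work.

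You claim that with $C' = \NP(E) + (-1,0]^{n+1}$, the count $\Card(C' \cap \NN^{n+1})$ is bounded by $(d+1)\bigl(\prod_i (h_i+1) - \prod_i h_i\bigr)$, with the subtraction explained by ``the half-open shift in each of the $n$ $\bt$-directions removes the interior lattice layer.'' This is false. The Minkowski sum with $(-1,0]^{n+1}$ does not delete lattice points: if $\NP(E)$ fills the box $\prod_i[0,h_i]\times[0,d]$, then $C'\cap\NN^{n+1}$ contains every integer point of that box, so the count is $(d+1)\prod_i(h_i+1)$ with no subtraction. (Try $n=1$, $h_1=2$, $d=1$: you get $6$, not $2$.) What this means is that $\dim_k W$, the quantity \cref{thm: main} actually controls, is bounded only by $(d+1)\prod_i(h_i+1)$ (or $(d+1)\binom{n+h}{n}$); plugging that into the Propositions~5.1--5.2 machinery of \cite{AB13}, as you propose, reproduces exactly the weaker bound $N = (d+1)\binom{n+h}{n}$ stated in the first paragraph of \cref{sec: diag}, \emph{not} the sharper bound of \cref{thm:diag}.

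The subtraction in \eqref{eq: N31} is not a feature of $\dim_k W$; it appears only after passing to a different object, and this is the key idea your proposal misses. The paper does not bound $\dim_k W$ and feed it into a black box. Instead (see \cref{thm:gendiag} and its proof), it takes $V$ to be the $K_{0,G}$-\emph{span} of $\Delta_G(W)$, where $K_{0,G}\simeq k(t)$ is generated by the diagonal monomial $t_1\cdots t_n$, and bounds $\dim_{K_{0,G}} V$. Two generators $\bt^{\bi} f^j/E_y$ and $\bt^{\bi'} f^j/E_y$ become $K_{0,G}$-collinear when $\bi\equiv\bi'\bmod G$, so the relevant count is not $\Card(C'\cap\NN^{n+1})$ but $\Card\bigl(\pi_G(C\cap\NN^{n+1})\bigr)$, the image under the projection $\pi_G:\RR^{n+1}\to(\RR^n/G_\RR)\times\RR$. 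Translating each lattice point by a multiple of $(1,\dots,1)$ so that at least one coordinate $a_i$ vanishes is precisely what produces the subtracted terms $\prod h_i$ and $\binom{h}{n}$ in the count of $\mathcal E_1$ and $\mathcal E_2$. Moreover, it is the $K_{0,G}$-dimension of a Frobenius-stable space—not the $k$-dimension—that directly yields a linearized polynomial of that $p$-degree, via the linear dependence of $\Delta(f),\Delta(f)^p,\dots,\Delta(f)^{p^N}$ over $K_{0,G}$; your third paragraph gestures at this conversion but leaves the $k$-versus-$k(t)$ bookkeeping vague, and a correct accounting would surface the missing factor. So your overall plan—section operators plus Christol-type finiteness—is the right skeleton, but the decisive step (the quotient by $G$ and the switch to $K_{0,G}$-linear algebra) is absent, and without it the claimed bound $N$ is not reached.
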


\cref{thm:diag} will be deduced from \cref{thm:gendiag}, a slightly more general result stated in terms of generalized diagonals and 
Newton polytopes.


\subsection{Generalized diagonals}

In what follows, we introduce a slight generalization of the diagonalization process. We continue with the previous notation: $k$ is a perfect field of characteristic~$p$ and 
$K_0=k(\bt)$, $R= k[\![\bt]\!]$, and  
$K= {\rm Frac}(R)$ are defined as in  \cref{sec: christol}.   
Let $G$ be a subgroup of $\ZZ^n$ such that the quotient $\ZZ^n/G$ has
no torsion. We define $K_{0,G}$ as the subfield of $K_0$ generated by $k$ and
by the monomials $\bt^{\bi}$ with $\bi \in G$. Similarly, we define
$\bR_G$ as the $\bk$-subalgebra of $\bR$ consisting of series of
the form $\sum_{\bi \in G} a(\bi) \bt^{\bi}$. 
Since $G$ is abstractly isomorphic to $\ZZ^m$ for some integer 
$m \leq n$, the rings $K_{0,G}$ and $\bR_G$ are  isomorphic to 
$\bk(x_1,\ldots, x_m)$ and $\bk[\![x_1, \ldots, x_m]\!]$, respectively. 

\begin{definition}
We keep the notation introduced above. 
The \emph{$G$-diagonal} is the operator defined by 
\[\begin{array}{rcl}
\Delta_G : \qquad
\bR & \longrightarrow & \bR_G \medskip \\
\displaystyle \sum_{\bi \in \NN^n} a(\bi) \bt^{\bi}
& \mapsto &
\displaystyle \sum_{\bi \in G} a(\bi) \bt^{\bi}
\end{array}\]
with the convention that $a(\bi) = 0$ for $\bi \not\in \NN^n$.
\end{definition}

When $G$ is the subgroup generated by $(1, \ldots, 1)$, the ring
$R_G$ is isomorphic to $k[\![t]\!]$ \emph{via} the map $t_1 \cdots t_n \mapsto t$, and the diagonal operator $\Delta_G$ 
reduces to the usual diagonal operator $\Delta$.
However, the more general construction of~$\Delta_G$ offers greater flexibility and
allows  for partial diagonals, as considered in \cite{DL}.  For instance, if $G$ is the
subgroup generated by $(1, \ldots, 1)$ together with the standard basis vectors $e_i = (0, \ldots,0, 1, 0, \ldots, 0)$ 
(where $1$ is in $i$-th position) for 
$i \in \{1, \ldots, m\}$, then   $$\bR_G \simeq k[\![t_1, \ldots, t_m, x]\!]$$
and we have 
\[\Delta_G\left(\sum_{i \in \NN^n} a(\bi) \bt^{\bi}\right) \,\, =
\sum_{\substack{(i_1, \ldots, i_m)\in \NN^m \\ k \in \NN}}
a(i_1, \ldots, i_m, k, \ldots, k)\: t_1^{i_1} \cdots t_m^{i_m} x^k\,.\]
More generally, one can verify that $\Delta_G$ is $K_{0,G}$-linear.

\begin{theorem}
\label{thm:gendiag}
Let $k$ be an arbitrary field of characteristic $p$,  and let 
$G$ be a subgroup of $\ZZ^n$ such that $\ZZ^n/G$ has no
torsion.  Let $G_{\RR}$ be the vector subspace of $\RR^n$ 
generated by $G$, and let 
$\pi_G : \RR^{n+1} \to (\RR^n/G_{\RR}) \times \RR$ denote the 
canonical projection. 
Let $A(\bt,y)\in \bk[\bt,y]$ and let $f\in \bk[\![\bt]\!]$ 
satisfy the algebraic relation $A(\bt,f)=0$. Define $C$ as the
convex subset of $\RR^{n+1}$ given by
\[
C  \coloneqq \NP(A) + \big(G_\RR \times (-1,0]\big)\, .
\]
Then,  $\Delta_G(f)$ is annihilated by a linearized polynomial with coefficients in  $K_{0,G}$ of $p$-degree at most $N$, 
where $N  \coloneqq \Card \big(\pi_G(C \cap \NN^{n+1})\big)$.
\end{theorem}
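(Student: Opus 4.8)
\textbf{Proof plan for \cref{thm:gendiag}.}

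The plan is to mimic the proof of \cref{thm: main}, but to work with the $G$-diagonal instead of the identity and to keep track of the projection $\pi_G$ throughout. First I would let $E(\bt,y)$ be the minimal polynomial of $f$ over $k(\bt)$, normalized to have globally coprime coefficients; by \cref{lem:separable} it is separable in $y$, so $E_y(\bt,f)\neq 0$ and $A = E\cdot F$ for a nonzero $F\in\bk[\bt,y]$. As in the proof of \cref{thm: main}, set $J := (-1,0]$ and $C' := \NP(E) + (G_\RR\times J)$, and consider the $\bk$-vector space $W\subset K$ spanned by the elements $\bt^{\bi} f^j/E_y(\bt,f)$ with $(\bi,j)\in C'\cap\NN^{n+1}$. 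The key structural fact I would establish is that $\Delta_G(W)$ is stable under \emph{some} finitely generated collection of section-type operators on $\bR_G \simeq \bk[\![x_1,\ldots,x_m]\!]$, and that its dimension over $\bk$ is controlled by $\Card(\pi_G(C'\cap\NN^{n+1}))$: two monomials $\bt^{\bi}$ and $\bt^{\bi'}$ map under $\Delta_G$ to nonzero terms lying in the same monomial of $\bR_G$ precisely when $\bi - \bi' \in G$, i.e.\ when $\pi_G$ identifies their exponents, so collapsing along cosets of $G$ is exactly what $\pi_G$ records on the exponent side.

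The technical heart is a commutation relation between $\Delta_G$ and the section operators, generalizing \cref{prop: res}. Concretely, since $\ZZ^n/G$ is torsion-free, one can choose coordinates so that $G$ is spanned by the first $m$ standard basis vectors (after a monomial change of variables that is an automorphism of $\bk[\![\bt]\!]$ respecting Newton polytopes up to the linear map inducing it); the $G$-diagonal in the last $n-m$ variables is then the same ``take the $(-1,\ldots,-1)$ residue'' operation iterated, exactly as the ordinary diagonal is built from residues in \cite{AB13}. I would therefore factor $\Delta_G$ through an $(n-m)$-fold residue-type map and apply \cref{lem: furstenberg} and \cref{prop: res} componentwise, using the multiplicativity of section operators \eqref{eq:fsecop} and the identity \eqref{eq: basis2} adapted to the variable $T$ that encodes $f$. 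For each $\br$ in the relevant index set, formula \eqref{eq:formulaQ} produces a polynomial $Q$ with $\NP(Q)\subset C'$ by the same Newton-polytope bookkeeping as in \cref{thm: main} (the ``divide by $p$, translate by $(-\tfrac1p\br,-\tfrac{p-1}p)$'' argument is unchanged, since the perturbation $G_\RR\times J$ is a cone stable under the operations involved). This shows $\Delta_G(W)$ is contained in a finite-dimensional $\bk$-vector space of dimension at most $\Card(\pi_G(C'\cap\NN^{n+1}))$ that is stable under the section operators of $\bR_G$, hence contains $\Delta_G(f)$ and all its images; since a finite-dimensional Frobenius-semilinear-stable space over a perfect field yields a linearized annihilating polynomial of $p$-degree at most the dimension (standard linear algebra over $\bk$, cf.\ the remark preceding \cref{thm: modp}), we obtain the bound $N = \Card(\pi_G(C'\cap\NN^{n+1}))$. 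Finally, as in the proof of \cref{thm: main}, $C = \NP(F) + C'$ with $F\neq 0$, so $C$ contains an integer translate of $C'$ and $\Card(\pi_G(C\cap\NN^{n+1}))\geq\Card(\pi_G(C'\cap\NN^{n+1}))$, giving the stated $N$.

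The main obstacle I anticipate is making precise the passage ``from $W\subset K$ stable under $\Omega_n$'' to ``$\Delta_G(W)\subset \bR_G$ stable under section operators of $\bR_G$'' with the \emph{right} dimension bound: one must check that applying a section operator $S_{\br}$ on $\bR$ and then $\Delta_G$ agrees with first applying $\Delta_G$ and then an appropriate section operator on $\bR_G$, and that this intertwining does not enlarge the span beyond the $\pi_G$-image count. This is where the torsion-freeness of $\ZZ^n/G$ is essential — it guarantees that $G$ is a direct summand of $\ZZ^n$, so the monomial reindexing is invertible over $\ZZ$ and the Frobenius structure descends cleanly to $\bR_G$. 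The other point requiring care is that $\pi_G$ is applied to the \emph{integer points} of $C$, not to $C$ itself, and that the generators of $W$ indexed by $C'\cap\NN^{n+1}$ may collapse under $\Delta_G$ in a way that is controlled exactly by $\pi_G$ restricted to those integer points; bounding the dimension by $\Card(\pi_G(C'\cap\NN^{n+1}))$ rather than $\Card(C'\cap\NN^{n+1})$ is precisely the improvement that lets one recover the sharper constant in \cref{thm:diag} (and thence \cref{thm: modp}) by optimizing over the choice of $G$.
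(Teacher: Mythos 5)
Your plan follows the same overall strategy as the paper: reduce to the perfect closure, work with the minimal polynomial $E$, form the $\bk$-vector space $W$ spanned by $\bt^{\bi}f^j/E_y(\bt,f)$ with $(\bi,j)\in C'\cap\NN^{n+1}$, use the collapsing along $G$-cosets (recorded by $\pi_G$) to get the right count, and finish with $C=\NP(F)+C'$. The paper's route to $W$'s invariance under $\Omega_n$ and to the commutation $\Delta_G\circ S_{\br}=S_{\br}\circ\Delta_G$ for $\br\in G$ is more economical than your proposed coordinate change plus iterated residues, but both should work; the real issues lie elsewhere.

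The genuine gap is in how you pass from section-operator stability to a linearized annihilator. You claim that $\Delta_G(W)$ sits inside a \emph{$\bk$}-vector space of dimension at most $\Card(\pi_G(C'\cap\NN^{n+1}))$ and that this suffices. That is incorrect on two counts. First, the correct bound $\Card(\pi_G(C'\cap\NN^{n+1}))$ only applies to the dimension over $K_{0,G}$: the generators $\bt^{\bi}f^j/E_y$ and $\bt^{\bi'}f^j/E_y$ with $\bi\equiv\bi'\bmod G$ differ by the factor $\bt^{\bi-\bi'}\in K_{0,G}$, which is $K_{0,G}$-collinearity but \emph{not} $\bk$-collinearity; over $\bk$ the dimension of $\Delta_G(W)$ is only bounded by $\Card(C'\cap\NN^{n+1})$, which loses the sharpening entirely. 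Second, a linearized annihilating polynomial of $p$-degree $\leq N$ over $K_{0,G}$ follows from having a \emph{Frobenius-stable $K_{0,G}$-vector space} of dimension $\leq N$ containing $\Delta_G(f)$, so that $\Delta_G(f),\Delta_G(f)^p,\ldots,\Delta_G(f)^{p^N}$ are $K_{0,G}$-linearly dependent; a $\bk$-space stable under section operators is not directly usable here.

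Consequently, the key technical step you gloss over is establishing that $V$, the $K_{0,G}$-span of $\Delta_G(W)$ in $K_G$, is stable under the Frobenius $\F$. Section-operator stability of $V$ does \emph{not} trivially imply Frobenius stability, because $S_{\br}$ and $\F$ go in opposite directions. The paper's argument is to introduce the $K_{0,G}$-linear isomorphism
$\psi: K_G\otimes_{K_{0,G},\F}K_{0,G}\to K_G$, $x\otimes y\mapsto x^p y$, relate its inverse to the section operators by $\psi^{-1}(x)=\sum_{\br\in G_p}S_{\br}(x)\otimes\bt^{\br}$, observe that since $V$ is closed under the $S_{\br}$ the map $\psi^{-1}$ restricts to an injective $K_{0,G}$-linear map $V\to V\otimes_{K_{0,G},\F}K_{0,G}$ between spaces of equal finite dimension, hence an isomorphism, so $\psi$ carries $V\otimes K_{0,G}$ onto $V$ and in particular $\F(V)\subset V$. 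Without this argument (or an equivalent one), the step from ``stable under section operators'' to ``Frobenius-semilinear-stable, hence annihilated by a linearized polynomial'' is unjustified, and the bound $N$ is not established.
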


Recall that if $k$ is a field of characteristic~$p$, then adjoining to~$k$ all the
$p^r$-th roots ($r \geq 1$) of all the elements of~$k$ yields a perfect field, called the
perfect closure of $k$, which we denote  by~$k_p$. 
Before proving Theorem~\ref{thm:gendiag}, we recall the following elementary result (see \cite[Prop.\ 3.1]{ABC} for a proof). 

\begin{proposition}\label{prop: descente}
Let $k$ be an arbitrary field of characteristic $p$ and let $k_p$ be its perfect closure.
Let $f\in k[[\bt]]$ be algebraic over $k(\bt)$. 
Then, $[k(\bt)(f):k(\bt)]=[k_{p}(\bt)(f):k_{p}(\bt)]$. 
\end{proposition}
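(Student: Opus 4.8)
The inequality $[k_p(\bt)(f):k_p(\bt)] \le [k(\bt)(f):k(\bt)]$ is immediate, since the minimal polynomial of $f$ over $k(\bt)$ still annihilates $f$ over $k_p(\bt)$; so the whole content lies in the reverse inequality. The plan is to deduce it from the fact that $f$ is \emph{separable} over $k(\bt)$. Indeed, $k_p(\bt)/k(\bt)$ is purely inseparable (the $t_i$ are untouched and every element of $k_p$ is purely inseparable over $k$), whereas $k(\bt)(f)/k(\bt)$ is separable; by the standard fact that a separable and a purely inseparable subextension of a common field are linearly disjoint, the minimal polynomial of $f$ over $k(\bt)$ stays irreducible over $k_p(\bt)$, i.e. $[k_p(\bt)(f):k_p(\bt)] = [k(\bt)(f):k(\bt)]$. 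Thus everything reduces to proving that $f\in k[[\bt]]$, being algebraic over $k(\bt)$, is separable over it --- the analogue of \cref{lem:separable} for a possibly imperfect field $k$.

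To establish separability I would argue by contradiction, replacing the section operators of \cref{lem:separable} by derivations. Let $E(\bt,y)\in k(\bt)[y]$ be the monic minimal polynomial of $f$, of degree $d$, and suppose $E(\bt,y)=\widehat E(\bt,y^p)$ with $\widehat E(\bt,z)\in k(\bt)[z]$ monic of degree $d/p$; passing through $z\mapsto y^p$ shows that $\widehat E$ is then the minimal polynomial of $f^p$ over $k(\bt)$. Now ${\rm Frac}(k[[\bt]])$ carries the $n$ partial derivatives $\partial/\partial t_i$ (acting term by term on power series) and, for each derivation $\delta$ of $k$ over $\Fp$, the coefficientwise extension $\sum a_{\bi}\bt^{\bi}\mapsto\sum\delta(a_{\bi})\bt^{\bi}$, which is again a derivation of $k[[\bt]]$; restricted to $k(\bt)$ these are exactly the partial derivatives with respect to a $p$-basis of $k(\bt)$ (namely $t_1,\dots,t_n$ together with a $p$-basis of $k$). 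Let $D$ be any one of them, extended to $z$-polynomials by $D(z)=0$. Differentiating the identity $\widehat E(\bt,f^p)=0$ and using $D(f^p)=p\,f^{p-1}D(f)=0$, we get $(D\widehat E)(\bt,f^p)=0$; since $\widehat E$ is monic, $D\widehat E$ has $z$-degree $<d/p$, so minimality forces $D\widehat E=0$. Hence every coefficient of $\widehat E$ is killed by all partial derivatives with respect to the chosen $p$-basis of $k(\bt)$, i.e. lies in $k(\bt)^p$ (the classical identity $\ker\bigl(d\colon k(\bt)\to\Omega_{k(\bt)/\Fp}\bigr)=k(\bt)^p$). Writing $\widehat E(\bt,z)=\sum_j\delta_j(\bt)^pz^j$ with $\delta_j\in k(\bt)$ then yields $0=\widehat E(\bt,f^p)=\bigl(\sum_j\delta_j f^j\bigr)^p$, hence $\sum_j\delta_j f^j=0$ in the domain ${\rm Frac}(k[[\bt]])$; this is a nonzero relation of $y$-degree $d/p<d$, contradicting the minimality of $E$.

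The step I expect to require the most care is precisely the one accommodating an imperfect ground field: over a perfect $k$ the derivations $\partial/\partial t_i$ alone already detect $k(\bt)^p$, but in general one must adjoin the coefficientwise extensions of a full system of derivations of $k$ over $\Fp$ (equivalently, differentiate with respect to a $p$-basis of $k$), verify that these are derivations of $k[[\bt]]$ hence of ${\rm Frac}(k[[\bt]])$, and identify their common kernel on $k(\bt)$ with $k(\bt)^p$. Once $f$ is known to be separable over $k(\bt)$, the passage to $k_p(\bt)$ is purely formal, as explained in the first paragraph.
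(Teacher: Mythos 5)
Your argument is correct. The paper itself does not reprove this proposition --- it refers to \cite[Prop.~3.1]{ABC} --- so there is no in-text proof to match against; judged on its own terms, your reduction and both of its halves hold up. The formal half (separable and purely inseparable extensions of $k(\bt)$ are linearly disjoint, and $k_p(\bt)/k(\bt)$ is indeed purely inseparable since each coefficient of an element of $k_p[\bt]$ has some $p^m$-th power in $k$) is standard and correctly invoked. The substantive half is the separability of an algebraic $f\in k[[\bt]]$ over $k(\bt)$ for \emph{imperfect} $k$; this is exactly the point where the paper's \cref{lem:separable} does not apply as stated, since the section operators require $k$ perfect (for imperfect $k$ the monomials $\bt^{\br}$ no longer span $K$ over $K^{\langle p\rangle}$). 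Your substitute --- differentiating the relation $\widehat E(\bt,f^p)=0$ with respect to the term-by-term derivations $\partial/\partial t_i$ and the coefficientwise extensions of a $p$-basis of derivations of $k$, using that $\widehat E$ is the minimal polynomial of $f^p$ to force $D\widehat E=0$, and then that the common kernel of these derivations on $k(\bt)$ is $k(\bt)^p$ --- is sound: these maps really are derivations of $k[[\bt]]$, hence of ${\rm Frac}(k[[\bt]])$, they restrict on $k(\bt)$ to the partials with respect to the $p$-basis $\{t_1,\dots,t_n\}\cup B$, and the final Frobenius factorization $\widehat E(\bt,f^p)=\bigl(\sum_j\delta_jf^j\bigr)^p$ produces the contradiction with minimality. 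The argument correctly exploits that $f$ lies in $k[[\bt]]$ (where these derivations are globally defined), which is essential since the statement fails for general algebraic elements over $k(\bt)$, e.g.\ $f=u^{1/p}$ with $u\in k\setminus k^p$.
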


\begin{proof}[Proof of Theorem \ref{thm:gendiag}] 
By Proposition~\ref{prop: descente}, we may, without any loss of
generality, replace the field $k$ by the perfect closure of the subfield of $k$ generated over~$\mathbb F_p$
by the coefficients of $f$. Hence we may assume that $k$ is perfect.

Let $E \in \bk(\bt, y)$ be the minimal polynomial of $f$, and let $E_y$  denote its derivative  with
respect to $y$.
Define $J  \coloneqq (-1,0]$ and 
\[
C'  \coloneqq \NP(E) + \big(G_\RR \times J\big)\, .
\]
By following the proof of \cref{thm: main}, we  obtain that the $\bk$-vector space
\[
W \coloneqq \left\{ \frac{P(\bt,f)}{E_y(\bt,f)} : 
P\in \bk[\bt,y],\, \NP(P) \subset C' \right\}
\]
contains $f$ and is invariant under~$S_{\br}$ for all $\br \in G$. Noticing that $\Delta_G$
commutes with $S_{\br}$ whenever $\br \in G$, it follows that $\Delta_G(W)$ is also invariant under
$S_{\br}$ for all $\br \in G$.

Let $V$ denote the $K_{0,G}$-span of $\Delta_G(W)$ in $K_G \coloneqq \text{Frac}(R_G)$. We first show that the dimension of $V$, viewed as a  $K_{0,G}$-vector space, is bounded by $N$, and then prove that $V$ is invariant under the action of the Frobenius map.  

By linearity, we see that $V$ is spanned by the elements $\frac{\bt^{\bi} f^j}{E_y(\bt,f)}$ for $(\bi, j)$
running over $C' \cap \NN^{n+1}$.
Moreover, two fractions $$\frac{\bt^{\bi} f^j}{E_y(\bt,f)} \quad \mbox{ and }\quad  
\frac{\bt^{\bi'} f^{j'}}{E_y(\bt,f)}$$ are $K_{0,G}$-collinear
when $\bi \equiv \bi' \bmod G$, which occurs if and
only if $\pi_G(\bi, j) = \pi_G(\bi', j)$. 
The dimension of $V$ over $K_{0,G}$ is therefore bounded above by the cardinality of $\pi_G(C' \cap
\NN^{n+1})$, which is itself  bounded by $N$ (see the final paragraph of the proof of~\cref{thm:
main} for more details).

Let us now show that $V$ is invariant under the Frobenius map $\F$. 
The latter acts as an endomorphism of $K_{0,G}$. 
We now define the ``relative'' Frobenius map on $K_G$ by 
\[\begin{array}{rcl}
\psi: \quad 
K_G \otimes_{K_{0,G},\F} K_{0,G} & \longrightarrow & K_G \\
x \otimes y & \mapsto & x^p y
\end{array}\]
where the notation $\otimes_{K_{0,G}, \F}$ indicates that we view
$K_{0,G}$ as an algebra over itself via $\F$.  Hence,
in $K_G \otimes_{K_{0,G}, \F} K_{0,G}$, we have 
$1 \otimes y = y^p \otimes 1$.
This construction ensures that $\psi$ is a $K_{0,G}$-linear
isomorphism. 
Furthermore, $\psi$ is related to the section operators \emph{via} the
formula 
\[\psi^{-1}(f) = \sum_{\br \in G_p} S_{\br}(f) \otimes \bt^{\br}\,,\]
where $G_p \subset G$ is a set of representatives of $G/pG$.
As shown  earlier, $V$ is closed under the action of $S_{\br}$ for all $\br
\in G$. Therefore, we conclude that $\psi^{-1}$ induces a $K_{0,G}$-linear morphism from $V$ to $V
\otimes_{K_{0,G}, \F} K_{0,G}$.
Since  $\psi^{-1}$ is the restriction of an injective map, it is clearly injective.
Furthermore,  because $V$ is finite dimensional over $K_{0,G}$ and  $\dim_{K_{0,G}} V = \dim_{K_{0,G}} (V
\otimes_{K_{0,G}, \F} K_{0,G})$, we conclude that  $\psi^{-1}$ is an isomorphism.
This implies that  $\psi$ takes $V \otimes_{K_{0,G}, \F} K_{0,G}$ to $V$, which in turn shows  that $V$
is invariant under the Frobenius map.

In particular, for all nonnegative integers 
$s$, we have $\Delta_G(f)^{p^s}\in V$. 
Since $\dim_{K_{0,G}} V \leq N$, it follows that $\Delta_G(f), \Delta_G(f)^p, \ldots,
\Delta_G(f)^{p^N}$ must be linearly dependent over $K_{0,G}$.
Thus, there exist $c_0, c_1, \ldots, c_N \in K_{0,G}$, not all zero, such that
\[c_0 \cdot \Delta_G(f) +  c_1 \cdot \Delta_G(f)^p + \cdots + 
c_N \cdot \Delta_G(f)^{p^N} = 0 \,,\]
as desired.
\end{proof}

\subsection{Proof of \cref{thm:diag}} We will now proceed with the proof of the main result of this section.

\begin{proof}[Proof of \cref{thm:diag}] 
We apply \cref{thm:gendiag} with the group $G$ generated
by $(1, \ldots, 1)$. As already noted, the diagonal $\Delta_G$
coincides with the usual diagonal $\Delta$, under the identification 
$t \coloneqq t_1 \cdots t_n$.
Let $A(\bt, y)$ be the minimal polynomial of $f$, so that $A$ has degree $d$, total height $h$, and partial height $\bh=(h_1,\ldots,h_n)$.
Let $\pi_G$  and $C$  be the mapping and convex set defined
in the statement of \cref{thm:gendiag}. 
By \cref{thm:gendiag}, it remains to prove that $\Card \big(\pi_G(C \cap \NN^{n+1})\big) \leq N$, where $N$ is defined as in Equation~\eqref{eq: N31}. 

Consider an element $c \coloneqq (a_1, \ldots, a_n, b) \in C \cap \NN^{n+1}$. By definition of $C$, we have $-1 < b \leq d$, and since 
 $b$ is an integer, we conclude that $0 \leq b \leq d$. 
Moreover, by translating $c$ by an element of $G$, we may assume that $0 \leq a_i \leq h_i$ for
all $i \in \{1, \ldots, n\}$, and $\sum_{i=1}^n a_i\leq h$.
Define $a \coloneqq \min\{a_1, \ldots, a_n\}$ and, for each $i$, set $\tilde a_i \coloneqq a_i - a$.
Then  at least one of the first $n$ coordinates of $\tilde c \coloneqq (\tilde a_1, \ldots, \tilde a_n, b)$ 
is zero. 
Furthermore, we still have $0\leq \tilde a_i \leq h_i$ and $\sum_{i=1}^n \tilde a_i \leq h$. 
Since $\pi_G(c) = \pi_G(\tilde c)$, 
it follows that every element of  $\pi_G(C \cap \NN^{n+1})$ has a preimage in both of the following sets:  
\[
\mathcal E_1 \coloneqq \left\{ (a_1,\ldots,a_n,b) \in \mathbb N^{n+1} :  b\leq d,  \forall i, a_i 
\leq h_i, \exists i, a_i=0 \right\}\,,
\]
\[
\mathcal E_2 \coloneqq \left\{ (a_1,\ldots,a_n,b) \in \mathbb N^{n+1} : b\leq d, \sum_{i=1}^na_i \leq h, \exists i, a_i=0 \right\} \,.
\]
The cardinalities of these sets are given by
\[\Card(\mathcal E_1) = (d+1)\cdot \left(\prod_{i=1}^n (h_i+1) - \prod_{i=1}^nh_i\right) \,,\] 
\[\Card (\mathcal E_2) =   (d+1)\cdot \left( \binom{n+h}{n}-\binom{h}{n} \right) \, .\] 
Thus, we obtain that $ \Card \big(\pi_G(C \cap \NN^{n+1})\big) \leq \min \{\Card(\mathcal E_1) , \Card(\mathcal E_2)\} =   N$, as required.  
\end{proof}

\subsection*{Acknowledgements} 
The authors would like to thank the anonymous referees for their careful reading of an earlier version of this paper and for their valuable comments.

\bibliographystyle{alpha}

\end{document}